\documentclass[11pt]{article}
\usepackage[margin=1in]{geometry}
\usepackage{natbib}
\usepackage[dvipsnames]{xcolor}
\usepackage{algorithm}
\usepackage{bm, amssymb, amsmath, amsthm, graphicx, apxproof}

\usepackage{apxproof}

\newcommand{\I}{\mathcal{I}}   
\newcommand{\E}{\mathbb{E}}    
\newcommand{\V}{\mathbb{V}}    
\newcommand{\R}{\mathbb{R}}    
\newcommand{\bO}{\mathcal{O}}  
\newcommand{\Nd}{\mathcal{N}}  
\newcommand{\sv}{\, | \,}      
\newcommand{\eps}{\varepsilon} 
\newcommand{\lam}{\lambda}     
\newcommand{\bx}{\bm{x}}       
\newcommand{\by}{\bm{y}}       
\newcommand{\bv}{\bm{v}}       
\newcommand{\bs}{\bm{s}}       
\newcommand{\bmu}{\bm{\mu}}    
\newcommand{\bS}{\bm{\Sigma}}  
\newcommand{\blam}{\bm{\lam}}  
\newcommand{\dif}{\mathop{}\!\mathrm{d}}   
\newcommand{\norm}[2][]{\left\Vert#2\right\Vert_{#1}} 
\newcommand{\set}[1]{\left\{ #1 \right\}}             
\newcommand{\mat}[1]{\begin{bmatrix}#1\end{bmatrix}}  

\newcommand{\kl}[2]{D_{KL}(#1 \, || \, #2)}

\newtheorem{theorem}{Theorem}

\newtheorem{proposition}{Proposition}

\newcommand{\Dpred}{\mathcal{D}'}
\newcommand{\Spred}{\mathcal{S}'}

\graphicspath{{./figures/}}

\def\spacingset#1{\renewcommand{\baselinestretch}%
{#1}\small\normalsize} \spacingset{1}

\title{Fast and accurate conditioning for large-scale Gaussian process
prediction problems}

\author{Samanyu Arora \and Christopher J. Geoga}

\date{}

\begin{document}

\maketitle

\begin{abstract}
Gaussian Process (GP) models provide a flexible framework for prediction and
uncertainty quantification. For most covariance functions, however, exact GP
prediction with $n$ points scales as $\mathcal{O}(n^3)$, making it prohibitively
expensive for large datasets or large numbers of prediction points. While
nearest neighbor-based prediction can work well in certain settings,
non-pathological circumstances (like measurement noise, for example) can severely
restrict its efficiency. This work presents a complementary approach where one
conditions on carefully designed linear combinations of data, which is
particularly effective in the setting of jointly predicting many values in large
connected regions of the data domain. For kernel functions that are smooth away
from the origin and simple prediction domains, this method can be exponentially
convergent in the number of linear combinations $r$ used for conditioning. The
procedure costs $\mathcal{O}(T r^2)$ work, where $T$ is the cost of solving a
linear system with the data covariance matrix, and so in many cases can be
computed in linear or near-linear cost by exploiting rank structure in
well-behaved covariance matrices. At the cost of $\mathcal{O}(nr^2)$ additional
precomputation work, this approach can also provide predictions at arbitrary
points of a designated region in $\mathcal{O}(1)$ online work, making it
particularly attractive for problems where prediction points are not known in
advance. After establishing favorable theoretical properties, we provide several
example applications to problems in prediction and matrix approximation.

\end{abstract}



\section{Introduction} \label{sec:intro}

Many applications in spatial statistics, computer experiments, and machine
learning require predicting a latent function at a large number of locations
and quantifying uncertainty in those predictions.  Gaussian process (GP)
models are a natural tool for this, as they give conditional expectations that
are linear functions of data and thus conditional distributions that are
themselves multivariate normal distributed. Moreover, they are by far the
easiest process model to specify (at least in dimensions greater than one),
being entirely specified by just their mean and covariance functions. In
particular, a process $f(\bx)$ is a GP with mean function $\mu(\bx) = \E f(\bx)$
and covariance function $K(\bx, \bx') = \text{Cov}(f(\bx), f(\bx'))$ if for any
given collection of locations $\mathcal{S} = \{x_1,\dots,x_n\}$, one has that
the vector $\bm{f} \in \R^n$ with $\bm{f}_j = f(\bx_j)$ has distribution 
\begin{equation*} 
  \bm{f} \sim \mathcal{N}(\bmu, \bS), 
  \quad \text{with} \quad 
  \bmu_j = \mu(\bx_j), \, \bS_{j,k} = K(\bx_j, \bx_k).
\end{equation*}
In many applications, one presumes to make measurements that are also polluted
with independent noise, so that actual measurements at a location $\bx$, denoted
$y(\bx)$, are given by the observation model
\begin{equation*}
\label{eq:obs_model}
y_i = f(\bx_i) + \varepsilon_i,
\qquad 
\boldsymbol{\varepsilon}\sim \mathcal{N}(\mathbf{0},\tau^2 \mathbf{I}_n).
\end{equation*}
We will make two simplifying assumptions in this work. First, we will assume
that $\mu(\bx) = \E f(\bx) \equiv 0$, so that $f$ (and thus $y$) are
\emph{mean-zero}, which is a common and mild assumption in the literature since
means are often estimated and handled separately. Second, we will assume that
the covariance function $K(\bx, \bx')$ is \emph{translation invariant}, so that
$K(\bx, \bx') = K(\bx - \bx')$. This makes the process $f(\bx)$
\emph{stationary}. This assumption is also very common among practitioners,
although it is not necessarily mild. Much of the analysis and methodology in
this work could readily be extended to the nonstationary case at little
conceptual but significant technical expense.

For the problem of predicting $f(\bx')$ at prediction locations $\mathcal{S}' =
\{\bx'_1,\dots,\bx'_m\}$, the conditional distribution of $\bm{f} =
[f(\bx'_k)]_{k=1}^m$ given $\by$ is given by
\begin{align} \label{eq:conditional}
\bm{f} \sv \by
\sim
\mathcal{N}\left(
\bm{C}^T
\bS^{-1}
\mathbf{y},
\;
\bm{K}(\mathcal{S}', \mathcal{S}')
-
\bm{C}^T
\bS^{-1}
\bm{C}
\right)
=
\mathcal{N}\left(
E[\bm{f} \sv \by],
\;
V[\bm{f} \sv \by]
\right),
\end{align}
where $\bm{C} = K(\mathcal{S}, \mathcal{S}') = [K(\bx_j,
\bx'_k)]_{j=1,k=1}^{n,m}$ denotes the kernel-implied cross covariance matrix and
$\bS = (K(\mathcal{S}, \mathcal{S}) + \tau^2 \I)$ is the marginal variance for
observed data $\by$.  As this fact demonstrates, the covariance function that
dictates the structure of matrices like $\bm{C}$ and $\bS$ is extremely
important in dictating the predictive properties of a GP model, particularly in
the conditional variances which do not depend on the data at all and are
\emph{entirely} determined by the choice of covariance function.

While its simplicity and concrete interpretation is appealing, computing these
conditional distributions can be expensive: both the conditional mean and
conditional variance require factorizing $\bS$, which costs $\bO(n^3)$ work and
$\bO(n^2)$ storage. Given that many modern GP applications are done with large
datasets, this is often prohibitively expensive to do exactly. And even in
settings where one is able to compute a conditional mean exactly, for the
problem of predicting at \emph{many} locations, or rapidly predicting in an
online way as one might do in Bayesian optimization that involves optimizing
$\bx' \mapsto E[f(\bx') \sv \by] + \sqrt{V[f(\bx') \sv \by]}$ \cite{mockus2012},
the computational burden grows significantly more expensive still. For this
reason, the vast majority of applications that require these variety of
predictive tasks use approximate methods.

A theme connecting many of the most performant scalable Gaussian process
approximations of today is \emph{locality}. The simple and natural idea that the
prediction $E[f(\bx_0) \sv \by]$ depends primarily on measurements in $\by$ made
closest to $\bx_0$ motivates the most broadly applicable and successful methods
for approximating Gaussian log-likelihoods in the setting of infeasibly large
data sizes. Methods under this umbrella go by several names, including
\emph{Vecchia} approximations
\cite{vecchia1988,stein2004,katzfuss2021,schafer2021}, nearest neighbor Gaussian
processes \cite{datta2016,finley2019}, Gauss-Markov random field (GMRF) models
\cite{rue2005,lindgren2011}, and
other popular methods such as inducing points \cite{snelson2007} can be interpreted as
special cases of this broader phenomenon \cite{kaminetz2026}. The thread
connecting all of these ideas is that the \emph{conditional} distribution
$f(\bx') \sv \by$ can in certain circumstances be well-approximated with
$f(\bx') \sv \by_{\sigma(\bx')}$, where $\sigma(\bx')$ denotes a small subset of
the full data typically consisting of the most nearby measurements to $f(\bx')$.
Approximating conditionals in this way offers clear computational improvements,
reducing the computational burden of the log-likelhood from $\bO(n^3)$ to
$\bO(k^3)$ if each $\sigma(\bx_j)$ is $\bO(1)$ in size. A fundamental limitation
for these subsetting-flavored prediction methods, however, is that their
accuracy depends on the covariance model satisfying the \emph{screening}
property \cite{stein2011}, which is closely connected to having Markovian
structure.  Unfortunately, Markovian-like structure can be brittle, and very
common modeling assumptions and designs---such as having measurement error---can
significantly impact the accuracy of these approximations \cite{stein2011}.
Methods to compensate for this degradation have been proposed and demonstrated
to work well \cite{schafer2021,katzfuss2021,geoga2024}. The mechanism for these
corrections, however, is that they work with approximate conditionals in a
latent space where locality is restored. As such, for the direct problem of
approximating conditional means and variances, they are not applicable.

\begin{figure}
  \centering
\begingroup
  \makeatletter
  \providecommand\color[2][]{%
    \GenericError{(gnuplot) \space\space\space\@spaces}{%
      Package color not loaded in conjunction with
      terminal option `colourtext'%
    }{See the gnuplot documentation for explanation.%
    }{Either use 'blacktext' in gnuplot or load the package
      color.sty in LaTeX.}%
    \renewcommand\color[2][]{}%
  }%
  \providecommand\includegraphics[2][]{%
    \GenericError{(gnuplot) \space\space\space\@spaces}{%
      Package graphicx or graphics not loaded%
    }{See the gnuplot documentation for explanation.%
    }{The gnuplot epslatex terminal needs graphicx.sty or graphics.sty.}%
    \renewcommand\includegraphics[2][]{}%
  }%
  \providecommand\rotatebox[2]{#2}%
  \@ifundefined{ifGPcolor}{%
    \newif\ifGPcolor
    \GPcolortrue
  }{}%
  \@ifundefined{ifGPblacktext}{%
    \newif\ifGPblacktext
    \GPblacktexttrue
  }{}%
  \let\gplgaddtomacro\g@addto@macro
  \gdef\gplbacktext{}%
  \gdef\gplfronttext{}%
  \makeatother
  \ifGPblacktext
    \def\colorrgb#1{}%
    \def\colorgray#1{}%
  \else
    \ifGPcolor
      \def\colorrgb#1{\color[rgb]{#1}}%
      \def\colorgray#1{\color[gray]{#1}}%
      \expandafter\def\csname LTw\endcsname{\color{white}}%
      \expandafter\def\csname LTb\endcsname{\color{black}}%
      \expandafter\def\csname LTa\endcsname{\color{black}}%
      \expandafter\def\csname LT0\endcsname{\color[rgb]{1,0,0}}%
      \expandafter\def\csname LT1\endcsname{\color[rgb]{0,1,0}}%
      \expandafter\def\csname LT2\endcsname{\color[rgb]{0,0,1}}%
      \expandafter\def\csname LT3\endcsname{\color[rgb]{1,0,1}}%
      \expandafter\def\csname LT4\endcsname{\color[rgb]{0,1,1}}%
      \expandafter\def\csname LT5\endcsname{\color[rgb]{1,1,0}}%
      \expandafter\def\csname LT6\endcsname{\color[rgb]{0,0,0}}%
      \expandafter\def\csname LT7\endcsname{\color[rgb]{1,0.3,0}}%
      \expandafter\def\csname LT8\endcsname{\color[rgb]{0.5,0.5,0.5}}%
    \else
      \def\colorrgb#1{\color{black}}%
      \def\colorgray#1{\color[gray]{#1}}%
      \expandafter\def\csname LTw\endcsname{\color{white}}%
      \expandafter\def\csname LTb\endcsname{\color{black}}%
      \expandafter\def\csname LTa\endcsname{\color{black}}%
      \expandafter\def\csname LT0\endcsname{\color{black}}%
      \expandafter\def\csname LT1\endcsname{\color{black}}%
      \expandafter\def\csname LT2\endcsname{\color{black}}%
      \expandafter\def\csname LT3\endcsname{\color{black}}%
      \expandafter\def\csname LT4\endcsname{\color{black}}%
      \expandafter\def\csname LT5\endcsname{\color{black}}%
      \expandafter\def\csname LT6\endcsname{\color{black}}%
      \expandafter\def\csname LT7\endcsname{\color{black}}%
      \expandafter\def\csname LT8\endcsname{\color{black}}%
    \fi
  \fi
    \setlength{\unitlength}{0.0500bp}%
    \ifx\gptboxheight\undefined%
      \newlength{\gptboxheight}%
      \newlength{\gptboxwidth}%
      \newsavebox{\gptboxtext}%
    \fi%
    \setlength{\fboxrule}{0.5pt}%
    \setlength{\fboxsep}{1pt}%
    \definecolor{tbcol}{rgb}{1,1,1}%
\begin{picture}(7920.00,3400.00)%
    \gplgaddtomacro\gplbacktext{%
    }%
    \gplgaddtomacro\gplfronttext{%
      \csname LTb\endcsname
      \put(1036,3112){\makebox(0,0){\strut{}\small Prediction domain}}%
    }%
    \gplgaddtomacro\gplbacktext{%
      \csname LTb\endcsname
      \put(3355,676){\makebox(0,0)[r]{\strut{}\footnotesize $10^{-7}$}}%
      \csname LTb\endcsname
      \put(3355,989){\makebox(0,0)[r]{\strut{}\footnotesize $10^{-6}$}}%
      \csname LTb\endcsname
      \put(3355,1303){\makebox(0,0)[r]{\strut{}\footnotesize $10^{-5}$}}%
      \csname LTb\endcsname
      \put(3355,1617){\makebox(0,0)[r]{\strut{}\footnotesize $10^{-4}$}}%
      \csname LTb\endcsname
      \put(3355,1931){\makebox(0,0)[r]{\strut{}\footnotesize $10^{-3}$}}%
      \csname LTb\endcsname
      \put(3355,2245){\makebox(0,0)[r]{\strut{}\footnotesize $10^{-2}$}}%
      \csname LTb\endcsname
      \put(3355,2559){\makebox(0,0)[r]{\strut{}\footnotesize $10^{-1}$}}%
      \csname LTb\endcsname
      \put(3355,2872){\makebox(0,0)[r]{\strut{}\footnotesize $10^{0}$}}%
      \csname LTb\endcsname
      \put(3456,436){\makebox(0,0){\strut{}\footnotesize 5}}%
      \csname LTb\endcsname
      \put(3892,436){\makebox(0,0){\strut{}\footnotesize 25}}%
      \csname LTb\endcsname
      \put(4438,436){\makebox(0,0){\strut{}\footnotesize 50}}%
      \csname LTb\endcsname
      \put(5529,436){\makebox(0,0){\strut{}\footnotesize 100}}%
    }%
    \gplgaddtomacro\gplfronttext{%
      \csname LTb\endcsname
      \put(7401,2753){\makebox(0,0)[r]{\strut{}\footnotesize KNN, $\tau^2 = 0.00$}}%
      \csname LTb\endcsname
      \put(7401,2513){\makebox(0,0)[r]{\strut{}\footnotesize KNN, $\tau^2 = 0.01$}}%
      \csname LTb\endcsname
      \put(7401,2273){\makebox(0,0)[r]{\strut{}\footnotesize KNN, $\tau^2 = 0.10$}}%
      \csname LTb\endcsname
      \put(7401,2033){\makebox(0,0)[r]{\strut{}\footnotesize KNN, $\tau^2 = 0.25$}}%
      \csname LTb\endcsname
      \put(7401,1794){\makebox(0,0)[r]{\strut{}\footnotesize $\bm{s} = \bm{Q}^T \by$, $\tau^2 = 0.00$}}%
      \csname LTb\endcsname
      \put(7401,1554){\makebox(0,0)[r]{\strut{}\footnotesize $\bm{s} = \bm{Q}^T \by$, $\tau^2 = 0.01$}}%
      \csname LTb\endcsname
      \put(7401,1314){\makebox(0,0)[r]{\strut{}\footnotesize $\bm{s} = \bm{Q}^T \by$, $\tau^2 = 0.10$}}%
      \csname LTb\endcsname
      \put(7401,1075){\makebox(0,0)[r]{\strut{}\footnotesize $\bm{s} = \bm{Q}^T \by$, $\tau^2 = 0.25$}}%
      \csname LTb\endcsname
      \csname LTb\endcsname
      \put(4493,76){\makebox(0,0){\strut{}neighbors $k$ or size $\bm{s} \in \R^r$}}%
      \csname LTb\endcsname
      \put(4493,3112){\makebox(0,0){\strut{}\small $f(\bx_0) \sv \by$ approximation error}}%
    }%
    \gplbacktext
    \put(0,0){\includegraphics[width={396.00bp},height={170.00bp}]{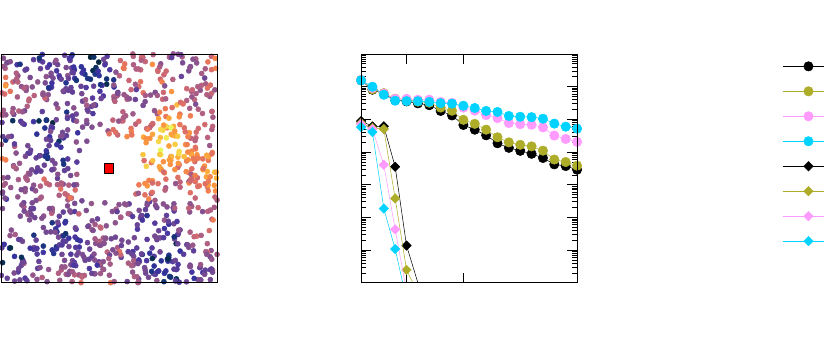}}%
    \gplfronttext
  \end{picture}%
\endgroup

  \caption{ 
    An accuracy comparison for the problem of approximating $f(\bx_0) \sv \by$
    with $\bx_0$ at the center of the square $[0,1]^2$ (example data and domain
    shown in the left panel) with a Mat\'ern$(\nu = 0.75)$ model and additive
    noise variance $\tau^2$ using either $k$ nearest neighbors or linear
    combinations $\bm{s} = \bm{Q}^T \by \in \R^r$ (the new approach of this
    work) for conditioning. The error metric used here (Kullback-Leibler
    divergence) is a closeness measure for probability distributions and is
    defined in Section
    \ref{sec:low-rank-crosscov}.
  }
  \label{fig:intro_knn_compare}
\end{figure}

In this work, we offer an alternative---and in many ways
complementary---approach to rapid and accurate prediction, including
$\bO(1)$-cost online prediction per location, in a given domain. Rather than
reducing the cost of prediction by conditioning on only a small subset of data,
we condition on a small number of \emph{dense} linear combinations of data,
denoted $\bm{s} = \bm{Q}^T \by$ for orthogonal matrix $\bm{Q} \in \R^{n \times
r}$, in a way that somewhat resembles a Vecchia-type approximation but for a
carefully transformed version of the data. As will be demonstrated in the next
section, if the covariance function being used satisfies mild conditions such as
smoothness away from the origin, prediction problems on large connected portions
of a domain can often be made effectively exact by conditioning on $\sim 100$ or
fewer carefully chosen linear combinations, and with additional precomputations
can provide online evaluation of effectively exact conditional
distributions---including conditional variances---in $\bO(1)$ work after an
additional $\bO(n)$-cost precomputation.  Figure \ref{fig:intro_knn_compare}
gives a demonstration of $k$-nearest neighbor prediction versus the the approach
described here at various levels of measurement noise with a Mat\'ern covariance
function.

While the analysis of this strategy from a probabilistic perspective is novel to
our knowledge, it is worth noting that many of the analytical techniques used in
this work are directly borrowed from a significant existing body of literature
in the fast algorithm literature. The notions of exploiting low-rank structure
in cross-covariances between well separated regions, for example, dates back at
least to the fast multipole method (FMM) \cite{greengard1987}, and the notation
and terminology around kernel interpolation and anterpolation closely follows
those conventions. Further, we note that several existing methods in the GP
literature have been proposed that address problems like scalably predicting at
significant numbers of spatial locations \cite{chen2023, greengard2022}.
Similarly, alternative methods for approximating and accelerating conditional
(co)variances have recently been presented in \cite{cai2025}. The primary
contribution of this work is to provide an analysis of this prediction mechanism
from the perspective of statistical efficiency, and to describe a mechanism by
which a practitioner with access to rapid linear solves $\bv \mapsto \bS^{-1}
\bv$ can build high-efficiency approximations to conditional distributions,
both in the numerical and statistical sense.

The remainder of the paper is as follows. In section \ref{sec:conditioning}, we
will outline the probabilistic mechanism being exploited that can make this
low-dimensional conditional distribution effectively as efficient as its generic
full-dimensional counterpart. Following that, in Section \ref{sec:computation}
we discuss a broadly applicable computational apparatus for obtaining these
special linear combinations and the requisite matrices for conditional
(co)-variances in a time complexity that is linear in the cost of a linear solve
$\bv \mapsto \bS^{-1} \bv$, which often coincides with the cost of $\bv \mapsto
\bS \bv$ due to the efficiency of common preconditioners for well-behaved
covariance matrices like the Vecchia approximation \cite{vecchia1988} (see
\cite{guinness2021} for an example, and \cite{geoga_vecchia_jl} for a performant
software library and runnable demonstrations).  Finally, Section \ref{sec:demo}
will provide numerical demonstrations showing the accuracy and efficiency of our
method in a variety of applications and tasks.

\section{Conditioning on Linear Combinations} \label{sec:conditioning}
A key observation that this section will outline in detail is that there is a
fundamental connection between smoothness of the covariance function away from
the origin and a reducible form for shared information between measurements in
two well-separated regions of a spatial domain. As mentioned in the
introduction, the deterministic aspect of this observation underlies many
classical fast algorithms for matrix-vector products
\cite{greengard1987,toukmaji1996,jiang2025}. What is new in this discussion,
however, is the concrete probabilistic analysis of how this structure can be
used to create low-dimensional transformations of the data $\by$ resembling
sufficient statistics.

Throughout this section unless otherwise stated, we will use $\mathcal{S} =
\{\bx_1,\dots,\bx_n\}$ to denote the training inputs and $\by \in \R^n$ for the
noisy observations (so that $\by_j = f(\bx_j) + \eps_j$). Moreover, we will
consistently use $\mathcal{S}' = \set{\bx'_j}_{j=1}^m \subset \Dpred$ to denote
an arbitrary collection of points at which we wish to predict $\bm{f} =
[f(\bx'_j)]_{j=1}^m$. We will assume that the nice regions containing the data
locations $\mathcal{S}$, denoted $\mathcal{D}$, and the region containing
prediction locations $\mathcal{S}'$, denoted $\Dpred$, are positively separated
under any reasonable distance metric. Finally, we will use $\bm{s} = \bm{Q}^T
\by$ to denote the lower-dimensional set of linear combinations of data that we
will use to approximate $\bm{f} \sv \by$. Proofs of all results are provided in
the appendix.

\subsection{Probabilistic idea: conditioning on linear summaries}
\label{sec:prob-idea}

We begin with the joint distribution of the observed data and the latent
function value at a single prediction location. Let $f_0 = f(\bx_0)$ be the
latent value of the process at a new location $\bx_0$. Under a Gaussian process
prior with covariance $K$ and noise variance $\tau^2$, Equation
\ref{eq:conditional} simplifies to
\begin{equation}\label{eq:condDist}
f_0 \sv \by
\sim
\mathcal{N}\!\bigl(
\bm{c}_0^{\!T}\bm{\Sigma}^{-1} \by,\;
K(\bx_0, \bx_0) - \bm{c}_0^{\!T}\bm{\Sigma}^{-1} \bm{c}_0
\bigr),
\qquad
\bm{c}_0 = K(\mathcal{S},\bx_0).
\end{equation}
A basic definitional fact about conditional expectations is that for any two random
variables $W_1$ and $W_2$, $E[W_1 \sv W_2] = E[W_1 \sv E[W_1 \sv W_2]]$. In the
Gaussian process setting, this general result can be interpreted and proven in
particularly concrete terms that motivate the more general statements that will
be given shortly.

\begin{proposition}
\label{thm:kriging-sufficiency}
Define $\blam = \bm{\Sigma}^{-1} \bm{c}_0 \in \R^n$ and $s = \blam^T \by$.  Then
$f_0 \sv s \;\stackrel{d}{=}\; f_0 \sv \by$, so that conditioning on the scalar
$s$ yields the same conditional distribution for $f_0$ as conditioning on the
full data $\by$.
\end{proposition}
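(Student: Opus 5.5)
We will prove the statement by computing both conditional distributions explicitly and checking that they have the same mean and variance. Since $(f_0, \by)$ is jointly Gaussian and mean zero, and $s = \blam^T \by$ is a linear function of $\by$, the pair $(f_0, s)$ is also jointly Gaussian and mean zero. Hence $f_0 \sv s$ is normal, and its parameters follow from the scalar analogue of Equation \eqref{eq:condDist}:
\begin{equation*}
f_0 \sv s \sim \mathcal{N}\!\left( \frac{\text{Cov}(f_0, s)}{\V(s)}\, s,\; K(\bx_0,\bx_0) - \frac{\text{Cov}(f_0,s)^2}{\V(s)} \right).
\end{equation*}
We treat the degenerate case $\V(s)=0$ separately at the end.

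The key step is computing the two moments. Because the noise $\boldsymbol{\varepsilon}$ is independent of $f$, we have $\text{Cov}(f_0, \by) = \bm{c}_0$ and $\V(\by) = \bS$. Therefore
\begin{equation*}
\text{Cov}(f_0, s) = \blam^T \bm{c}_0 = \bm{c}_0^T \bS^{-1} \bm{c}_0,
\qquad
\V(s) = \blam^T \bS \blam = \bm{c}_0^T \bS^{-1}\bS\bS^{-1} \bm{c}_0 = \bm{c}_0^T \bS^{-1} \bm{c}_0 .
\end{equation*}
These two quantities coincide; call their common value $q$. Substituting, the conditional mean is $(q/q)\, s = \blam^T\by = \bm{c}_0^T \bS^{-1}\by$, and the conditional variance is $K(\bx_0,\bx_0) - q^2/q = K(\bx_0,\bx_0) - \bm{c}_0^T\bS^{-1}\bm{c}_0$. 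Both match Equation \eqref{eq:condDist}, which proves the claim.

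The only real obstacle is the degenerate case $q = 0$. Because $\bS$ is positive definite, $q=0$ forces $\bm{c}_0 = \bm{0}$. Then $f_0$ is independent of $\by$ and $s \equiv 0$, so both conditional laws reduce to $\mathcal{N}(0, K(\bx_0,\bx_0))$ and agree trivially.

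An alternative, more conceptual route is to show that the residual $\by - \bS\blam\, q^{-1} s$ is uncorrelated with both $f_0$ and $s$. Joint Gaussianity then makes it independent of $(f_0, s)$, which shows directly that $f_0 \sv \by$ depends on $\by$ only through $s$. The direct moment comparison above is shorter, so it is the one we would write out.
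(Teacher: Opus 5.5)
Your proposal is correct and follows essentially the paper's proof. Both compute $\text{Var}(s) = \text{Cov}(f_0, s) = \bm{c}_0^T \bS^{-1} \bm{c}_0$ and check that the mean and variance of $f_0 \sv s$ reduce to those of $f_0 \sv \by$ in \eqref{eq:condDist}; your separate treatment of $\bm{c}_0 = \bm{0}$ is a small addition, since the paper divides by $\bm{c}_0^T \bS^{-1} \bm{c}_0$ without comment.
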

\begin{toappendix}
\begin{proof}[Proof of Proposition \ref{thm:kriging-sufficiency}]
Noting that
$
\text{Var}(s)
=
\blam^T \bm{\Sigma} \blam
= 
\bm{c}_0^{\!T}\bm{\Sigma}^{-1} \bm{c}_0,
$
a standard computation gives the conditional mean $E[f_0 \sv s]$ as
\begin{equation*}
E[f_0 \sv s]
= \bm{c}_0^{\!T}\blam\;(\blam^T \bm{\Sigma} \blam)^{-1}\;\blam^T \by
= \bm{c}_0^{\!T}\bm{\Sigma}^{-1} \bm{c}_0\;
\bigl(\bm{c}_0^{\!T}\bm{\Sigma}^{-1} \bm{c}_0\bigr)^{-1}\;
\bm{c}_0^{\!T}\bm{\Sigma}^{-1} \by
= \bm{c}_0^{\!T}\bm{\Sigma}^{-1} \by.
\end{equation*}
A similar computation gives the conditional variance $V[f_0 \sv s]$ as
\begin{equation*}
V[f_0 \sv s]
= K(\bx_0, \bx_0) - \bm{c}_0^{\!T}\blam\;(\blam^T \bm{\Sigma}
\blam)^{-1}\;\blam^T \bm{c}_0
= K(\bx_0, \bx_0) - \bm{c}_0^{\!T}\bm{\Sigma}^{-1} \bm{c}_0.
\end{equation*}
These coincide with the mean and variance of $f_0 \sv \by$, and so since the
multivariate normal is fully specified by its first two moments we conclude the
two distributions are identical.
\end{proof} 
\end{toappendix}
Naturally, this is not particularly useful at face value since obtaining $E[f_0
\sv \by] = \blam^T \by$ and then conditioning on that scalar does not actually
save the user any computation. But it motivates the \emph{idea} that there are
specific linear combinations of the data that may contain nearly all of the
predictive information of a given data vector $\by$ as it pertains to an
unknown value $f(\bx_0)$. And if linear predictor vectors $\blam$ for many
prediction points live in a low-dimensional subspace, then a reasonably small
number of linear combinations may be strongly explanatory for a large number of
prediction locations. Letting $\bm{\alpha} = \bm{\Sigma}^{-1} \by$, we note that the Kriging estimator
for $f(\bx_0)$ can be written in the above way or in the potentially more
familiar kernel interpolation form as
\begin{equation} \label{eq:krige-dual}
  E[f(\bx_0) \sv \by] = \blam(\bx_0)^T \by = \sum_{j=1}^n \alpha_j K(\bx_j-\bx_0).
\end{equation}
And so particularly in settings where one wants to predict at many $\bx'$ values
in some connected region of space that contains no or a small number of data
points which we will denote with $\Dpred$, then smoothness of $K$ in the
appropriate regions of its domain will impart smoothness to the predictive mean
$\hat{f}$ on $\Dpred$. The crux of our approach, as we will now demonstrate, is
that this smoothness is exploitable to obtain such special linear combinations
that are informative and whose corresponding matrix operations can be
accelerated.

\subsection{Low-rank structure in cross-covariance matrices}
\label{sec:low-rank-crosscov}

A very common and desirable property of covariance functions is that they are
smooth away from the origin. This smoothness means that if $\bx \in
\mathcal{D}$, for a suitable family of basis functions $\set{T_j}$ the
approximation
\begin{equation*} 
  \bx' \mapsto K(\bx - \bx') \approx 
  \sum_{j=1}^R \beta_jT_j (\bx'),
  \qquad 
  \bx' \in \Dpred,
\end{equation*}
converges rapidly with respect to $R$. In the following section, we will develop
an error analysis for joint conditional distributions of prediction points in
$\Dpred$ that presumes such a suitable basis has been identified, and that
\begin{equation} \label{eq:inf_error}
  \eps_R(f, \hat{f}) =
  \sum_{j > R}\bigl|\beta_{j}\bigr|,
  \quad
  \text{with}
  \quad
  f = \sum_{j=1}^\infty \beta_{j} T_{j},
  \quad
  \hat{f} = \sum_{j=1}^R \beta_{j} T_{j},
\end{equation}
an upper bound for $\norm[{L^{\infty}(\Dpred)}]{f - \hat{f}}$, converges rapidly
with $R$. For simple domains, this problem has been extremely well-studied and
excellent options are known to exist. If $\Dpred$ is a hyper-rectangle, for
example, tensor-products of Chebyshev polynomials are a natural choice for
$\set{T_j}$, and if $f$ is smooth then $\eps_R$ will decay exponentially with
$R$ \cite{trefethen2019}. For approximation on a disk in two dimensions, Zernike
polynomials are the analogous natural choice \cite{niu2022}. For the exposition
of this work, we will presume that the function $\bx' \mapsto K(\bx - \bx')$
will be approximated on $\Dpred$, presumed to be a hyperrectangle, so that
$\set{T_j}$ will be chosen as a tensor product of Chebyshev polynomials. But we
stress that the actual choice of basis $\set{T_j}$ itself is not critical to the
below analysis---all that matters is that there exists \emph{some} basis for
which the above error metric $\eps_R$ converges rapidly with $R$. For
convenience, we will assume that each $T_j$ is uniformly bounded by one. Let
$\{\tilde{\bx}_i\}_{i=1}^R \subset \Dpred$ be a collection of Chebyshev nodes.
Following the notation of \cite{jiang2025}, define
\begin{equation*}
\bm{V}\in\mathbb R^{R\times R},\quad \bm{V}_{i,j}=T_j(\tilde{\bx}_i),\qquad
\bm{E}\in\mathbb R^{m\times R},\quad \bm{E}_{k,j}=T_j(\bx'_k),\qquad
\bm{U}=\bm{E}\bm{V}^{-1}\in\mathbb R^{m\times R},
\end{equation*}
so that $\bm{U}$ is the interpolation operator on $\Dpred$. For each fixed
$\bx\in \mathcal{S}$, smoothness of $\bx'\mapsto K(\bx-\bx')$ on $\Dpred$ yields
\begin{equation} \label{eq:decomp}
K(\bx-\bx'_k)\ \approx\ \sum_{i=1}^R \bm{U}_{k,i}\,K(\bx-\tilde{\bx}_i),\qquad k=1,\ldots,m.
\end{equation}
Consequently, for any $\bm{\rho}\in\mathbb R^m$,
\begin{equation*}
\sum_{k=1}^m \rho_k\,K(\bx-\bx'_k)\ \approx\ \sum_{i=1}^R \tilde\rho_i\,K(\bx-\tilde{\bx}_i),
\qquad \tilde{\bm{\rho}}=\bm{U}^T \bm{\rho},
\end{equation*}
and we refer to $\bm{U}^T$ as the \emph{anterpolation} matrix. Writing
$\tilde{\bm{C}}_R =K(\mathcal{S}, \set{\tilde{\bx}_j}_{j=1}^R)$, this
implies the cross-covariance compression
\begin{equation} \label{eq:C_approx}
\bm{C}=\bigl[K(\bx_i-\bx'_k)\bigr]_{i=1,k=1}^{n,m}\ \approx\
\tilde{\bm{C}}_R \,\bm{U}^T
\end{equation}
\emph{for any collection of points $\set{\bx'_k} \subset \Dpred$}, providing a
formal explanation for low rank structure in the cross covariance matrix
$\bm{C}$. In what follows, we will describe an error analysis procedure based on
extracting the non-degenerate column basis (or dominant singular vectors of)
$\tilde{\bm{\Lambda}}_R = \bm{\Sigma}^{-1} \tilde{\bm{C}}_R \approx \bm{Q} \bm{R}$, where
$\bm{Q} \in \R^{n \times r}$ is an orthogonal matrix representing that column
space basis. $\tilde{\bm{\Lambda}}_R$ is interpretable as the matrix whose columns
give Kriging weights for the interpolation points distributed in
$\Dpred$. As we will demonstrate, because of the expansion (\ref{eq:decomp}),
this low-dimensional subspace will \emph{almost} contain Kriging weights
$\blam(\bx_0) = \bm{\Sigma}^{-1} K(\mathcal{S}, \bx_0)$ for \emph{any} $\bx_0 \in
\Dpred$. And since the distribution $f(\bx_0) \sv \by$ is precisely equal to
$f(\bx_0) \sv \blam(\bx_0)^T \by$ by Proposition \ref{thm:kriging-sufficiency},
we will show that $f(\bx_0) \sv \bm{Q}^T \by$ with $\bm{Q}$ as above is an
excellent approximation to $f(\bx_0) \sv \by$ for any $\bx_0 \in \Dpred$.
Example visualizations of the columns of $\bm{Q}$ for a Gaussian
covariance function are given in Figure \ref{fig:Q_demo}, demonstrating that
these dominant singular vectors of $\tilde{\bm{\Lambda}}_R$ are very non-local
(and thus certainly not captured by $k$-nearest neighbor conditioning, for
example).

\begin{figure}
  \centering
\begingroup
  \makeatletter
  \providecommand\color[2][]{%
    \GenericError{(gnuplot) \space\space\space\@spaces}{%
      Package color not loaded in conjunction with
      terminal option `colourtext'%
    }{See the gnuplot documentation for explanation.%
    }{Either use 'blacktext' in gnuplot or load the package
      color.sty in LaTeX.}%
    \renewcommand\color[2][]{}%
  }%
  \providecommand\includegraphics[2][]{%
    \GenericError{(gnuplot) \space\space\space\@spaces}{%
      Package graphicx or graphics not loaded%
    }{See the gnuplot documentation for explanation.%
    }{The gnuplot epslatex terminal needs graphicx.sty or graphics.sty.}%
    \renewcommand\includegraphics[2][]{}%
  }%
  \providecommand\rotatebox[2]{#2}%
  \@ifundefined{ifGPcolor}{%
    \newif\ifGPcolor
    \GPcolortrue
  }{}%
  \@ifundefined{ifGPblacktext}{%
    \newif\ifGPblacktext
    \GPblacktexttrue
  }{}%
  \let\gplgaddtomacro\g@addto@macro
  \gdef\gplbacktext{}%
  \gdef\gplfronttext{}%
  \makeatother
  \ifGPblacktext
    \def\colorrgb#1{}%
    \def\colorgray#1{}%
  \else
    \ifGPcolor
      \def\colorrgb#1{\color[rgb]{#1}}%
      \def\colorgray#1{\color[gray]{#1}}%
      \expandafter\def\csname LTw\endcsname{\color{white}}%
      \expandafter\def\csname LTb\endcsname{\color{black}}%
      \expandafter\def\csname LTa\endcsname{\color{black}}%
      \expandafter\def\csname LT0\endcsname{\color[rgb]{1,0,0}}%
      \expandafter\def\csname LT1\endcsname{\color[rgb]{0,1,0}}%
      \expandafter\def\csname LT2\endcsname{\color[rgb]{0,0,1}}%
      \expandafter\def\csname LT3\endcsname{\color[rgb]{1,0,1}}%
      \expandafter\def\csname LT4\endcsname{\color[rgb]{0,1,1}}%
      \expandafter\def\csname LT5\endcsname{\color[rgb]{1,1,0}}%
      \expandafter\def\csname LT6\endcsname{\color[rgb]{0,0,0}}%
      \expandafter\def\csname LT7\endcsname{\color[rgb]{1,0.3,0}}%
      \expandafter\def\csname LT8\endcsname{\color[rgb]{0.5,0.5,0.5}}%
    \else
      \def\colorrgb#1{\color{black}}%
      \def\colorgray#1{\color[gray]{#1}}%
      \expandafter\def\csname LTw\endcsname{\color{white}}%
      \expandafter\def\csname LTb\endcsname{\color{black}}%
      \expandafter\def\csname LTa\endcsname{\color{black}}%
      \expandafter\def\csname LT0\endcsname{\color{black}}%
      \expandafter\def\csname LT1\endcsname{\color{black}}%
      \expandafter\def\csname LT2\endcsname{\color{black}}%
      \expandafter\def\csname LT3\endcsname{\color{black}}%
      \expandafter\def\csname LT4\endcsname{\color{black}}%
      \expandafter\def\csname LT5\endcsname{\color{black}}%
      \expandafter\def\csname LT6\endcsname{\color{black}}%
      \expandafter\def\csname LT7\endcsname{\color{black}}%
      \expandafter\def\csname LT8\endcsname{\color{black}}%
    \fi
  \fi
    \setlength{\unitlength}{0.0500bp}%
    \ifx\gptboxheight\undefined%
      \newlength{\gptboxheight}%
      \newlength{\gptboxwidth}%
      \newsavebox{\gptboxtext}%
    \fi%
    \setlength{\fboxrule}{0.5pt}%
    \setlength{\fboxsep}{1pt}%
    \definecolor{tbcol}{rgb}{1,1,1}%
\begin{picture}(7920.00,3400.00)%
    \gplgaddtomacro\gplbacktext{%
    }%
    \gplgaddtomacro\gplfronttext{%
    }%
    \gplgaddtomacro\gplbacktext{%
    }%
    \gplgaddtomacro\gplfronttext{%
    }%
    \gplgaddtomacro\gplbacktext{%
    }%
    \gplgaddtomacro\gplfronttext{%
    }%
    \gplgaddtomacro\gplbacktext{%
    }%
    \gplgaddtomacro\gplfronttext{%
    }%
    \gplgaddtomacro\gplbacktext{%
    }%
    \gplgaddtomacro\gplfronttext{%
    }%
    \gplgaddtomacro\gplbacktext{%
    }%
    \gplgaddtomacro\gplfronttext{%
    }%
    \gplgaddtomacro\gplbacktext{%
    }%
    \gplgaddtomacro\gplfronttext{%
    }%
    \gplgaddtomacro\gplbacktext{%
    }%
    \gplgaddtomacro\gplfronttext{%
    }%
    \gplbacktext
    \put(0,0){\includegraphics[width={396.00bp},height={170.00bp}]{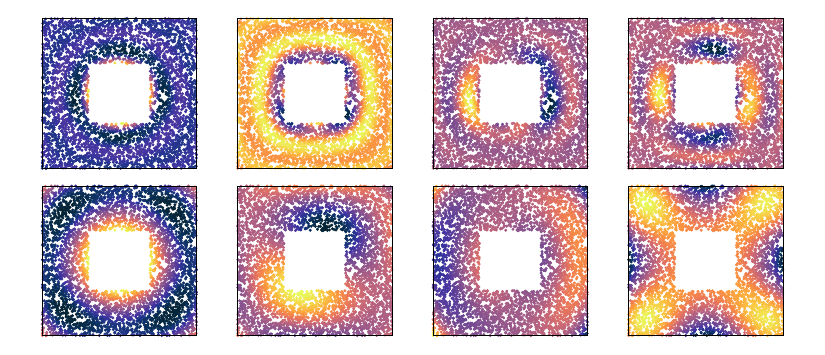}}%
    \gplfronttext
  \end{picture}%
\endgroup

  \caption{Sample columns from $\bm{Q}$ extracted from the column space of
  $\tilde{\bm{\Lambda}} = \bm{\Sigma}^{-1} \tilde{\bm{C}}$ above for a Gaussian
  covariance function $K(\bx - \bx') = \exp((\rho^{-1} \norm[2]{\bx - \bx'})^2)$
  and additive noise with variance $\tau^2=0.1$ at uniform random points on
  $[0,1]^2$ minus a missing region in the center. The top row shows selected
  columns of $\bm{Q}$ with $\rho=0.25$, and the bottom row for $\rho=0.75$.}
  \label{fig:Q_demo}
\end{figure}

We will describe two forms of error analysis. First, we will presume a
proper expansion using a suitable set of interpolation nodes and basis functions
$\set{T_{j}(\bx)}$, presumed to be uniformly bounded by one (as in the case of
Chebyshev polynomials, for example). In this setting, one can establish bounds
depending on an infinity-norm type error control, fundamentally based on the
variations of the error measure $\eps_R$ introduced above. While we do not give
full detailed bounds on the decay of $\eps_R$ with respect to $R$ since it would
naturally be kernel-dependent, the thrust of Theorem \ref{thm:joint_efficiency},
for example, is that if one has an error approximation $\eps_R$ for $\bx'
\mapsto K(\bx - \bx')$ with $\bx \in \mathcal{D}$ and $\bx' \in \Dpred$, then
one can design a $\bm{Q} \in \R^{n \times r}$, $r \leq R$, such that the desired
approximate conditional distributions converge to the true one as fast as
$\eps_R$ decays to zero. If $\Dpred$ is a hyper-rectangle and $K$ is smooth away
from the origin, then, these results can be interpreted as proving exponential
decay in distributional approximation error.

In the second case, we give more generic bounds for an ``improper"
$\tilde{\bm{C}}_R$, where the nodes $\set{\tilde{\bx}_k}$ are not chosen in a
careful or optimal way, but are instead given in an oversampled form with the
intention that a small number of singular vectors from $\tilde{\bm{\Lambda}}_R$
will be selected. Practically speaking, this second regime is the category that
most applications will be in, as even simple data and missing domains
can produce a $\Dpred$ that does not have sufficiently simple structure that,
e.g., Chebyshev, Zernike, or other well-studied orthogonal polynomial families
can be used.  Nonetheless, as both the theoretical results and numerical
demonstrations will show, in settings such as severely oversampled lattice
designs for $\set{\tilde{\bx}_i}_{i=1}^R$, the efficiency cost can be mild. The
important observation in this setting is that with sufficient oversampling and
coverage in the improper interpolation points, one can recover the proper
$\eps_R$-type bounds of Theorem \ref{thm:joint_efficiency} with only a mild cost
of the condition number of an internal interpolation matrix. Considering that a
quadrature rule is in some sense being discovered in this approach, however, we
are reluctant to claim that those more general results prove exponential
convergence like the ideal ones do.

The fundamental error metric that we use to quantify approximation error is the
\emph{Kullback-Leibler divergence}, denoted $\kl{P}{Q}$ for two distributions
$P$ and $Q$, and is defined in general as $\kl{P}{Q} = \int \log \frac{\dif
P(x)}{\dif Q(x)} \dif P(x)$ and interpreted as the expected value of a
log-likelihood ratio test statistic. For two multivariate normal distributions,
this is given by
\begin{equation}\label{eq:kl}
  \kl{N(\bmu, \bS)}{N(\bmu', \bS')}
  =
  \frac{1}{2}\left[
    \mathrm{tr}\!\left(
      (\bS')^{-1}\bS
    \right)
    -
    m
    +
    (\bmu'-\bmu)^T(\bS')^{-1}(\bmu'-\bmu)
    +
    \log\frac{\det\bS'}{\det\bS}
  \right].
\end{equation}
In our particular case, we will compare the multivariate normal distribution
$\bm{f} \sv \by$, where $\bm{f} = [f(\bx_1'), ..., f(\bx_m')]$ is a vector of
un-observed values of the process \emph{without} noise, and the distribution
$\bm{f} \sv \bm{Q}^T \by$.

\subsection{Efficiency analysis with expansion error bounds} \label{sec:eps_analysis}

Due to the presumed smoothness of the covariance function and domain-dependent
effects, when (\ref{eq:C_approx}) can effectively be assembled to high accuracy
(in the sense that an expansion order $R$ can be selected so that $\eps_R$ as
defined above is small for small $R$), the column space of
$\tilde{\bm{\Lambda}}_R$ will nearly contain the column space of $\bm{\Lambda} =
\bm{\Sigma}^{-1} \bm{C}$, where $\bm{C} = K(\mathcal{S}, \Spred)$ is the
cross-covariance between data locations and an arbitrary collection of points in
$\Dpred$. The following result gives an algebraic-type characterization of the
efficiency of conditioning on $\bm{s} = \bm{Q}^T \by$ when predicting at an
arbitrary location $\bx_0 \in \Dpred$.

\begin{theorem} \label{thm:dist}
  Let $\blam(\bx_0)$ be the Kriging weights for predicting $f(\bx_0)$, $\bx_0
  \in \Dpred$, given $\by$, so that $\blam(\bx_0)=\bm{\Sigma}^{-1}\bm c(\bx_0)$
  and $\bm c(\bx_0) = K(\mathcal{S}, \bx_0)$ is the cross covariance. If
  $\tilde{\bm{\Lambda}}_R =\bm Q \bm R$ as above is the machine-precision
  factorization of $\tilde{\bm{\Lambda}}_R =\bm{\Sigma}^{-1}\tilde{\bm C}_R$
  (meaning that degenerate singular vectors have been dropped), then
  \begin{equation*} 
    \text{\emph{dist}}(\blam(\bx_0), \mathrm{sp}(\bm{Q})) \leq \norm[2]{\bm{\Sigma}^{-1}}\,\sqrt n\,\eps_R,
  \end{equation*}
  where $\eps_R$ is given by
  \begin{equation*}
    \eps_R
    = \max_{i=1,\ldots,n}\ \sum_{j > R}\bigl|\beta_{i,j}\bigr|
  \end{equation*}
  and is the largest $L^{\infty}(\Dpred)$ upper bound for approximating each
  $\bx' \mapsto K(\bx_i - \bx')$ on $\Dpred$.
\end{theorem}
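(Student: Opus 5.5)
The plan is to exhibit an explicit vector in $\mathrm{sp}(\bm{Q})$ that is close to $\blam(\bx_0)$. Since the factorization $\tilde{\bm{\Lambda}}_R = \bm{Q}\bm{R}$ is exact up to machine precision, $\mathrm{sp}(\bm{Q})$ coincides with the column space of $\tilde{\bm{\Lambda}}_R = \bm{\Sigma}^{-1}\tilde{\bm{C}}_R$. The natural candidate comes from the interpolation identity (\ref{eq:decomp}) applied to the single point $\bx_0$. Let $\bm{u}(\bx_0)\in\R^R$ be the row of interpolation weights, $\bm{u}(\bx_0)^T = \bm{e}(\bx_0)^T\bm{V}^{-1}$ with $\bm{e}(\bx_0)_j = T_j(\bx_0)$. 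Then $\bm{\Sigma}^{-1}\tilde{\bm{C}}_R\,\bm{u}(\bx_0) = \bm{Q}\bm{R}\,\bm{u}(\bx_0)\in\mathrm{sp}(\bm{Q})$, so
\begin{equation*}
\mathrm{dist}(\blam(\bx_0),\mathrm{sp}(\bm{Q})) \le \norm[2]{\bm{\Sigma}^{-1}\bigl(\bm{c}(\bx_0) - \tilde{\bm{C}}_R\bm{u}(\bx_0)\bigr)} \le \norm[2]{\bm{\Sigma}^{-1}}\,\norm[2]{\bm{c}(\bx_0) - \tilde{\bm{C}}_R\bm{u}(\bx_0)}.
\end{equation*}

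It then remains to bound the residual vector entrywise. Its $i$-th entry is $K(\bx_i-\bx_0) - \sum_{k} \bm{u}_k(\bx_0)K(\bx_i-\tilde{\bx}_k)$. This is exactly the error at $\bx_0$ of the degree-$R$ interpolant of $g_i(\bx') = K(\bx_i-\bx')$ built from its values at the nodes $\{\tilde{\bx}_k\}$. Write $g_i = \sum_j \beta_{i,j}T_j$ and split it as $\hat g_i + (g_i-\hat g_i)$ with $\hat g_i = \sum_{j\le R}\beta_{i,j}T_j$. Because interpolation reproduces $\mathrm{sp}\{T_1,\dots,T_R\}$ exactly, the $\hat g_i$ part contributes no error. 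For the tail, each $|T_j|\le 1$, so $|g_i(\bx_0)-\hat g_i(\bx_0)|\le\sum_{j>R}|\beta_{i,j}|\le \eps_R$. Taking the maximum over $i$ and using $\norm[2]{\bm{v}}\le\sqrt n\,\norm[\infty]{\bm{v}}$ then gives the claimed $\sqrt n\,\eps_R$ factor.

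The main obstacle is the tail term under interpolation. The interpolant of the tail $g_i - \hat g_i$ is not zero: it equals $\bm{e}(\bx_0)^T\bm{V}^{-1}$ applied to the tail's nodal values. A naive bound would therefore introduce a Lebesgue-constant factor $\norm[\infty]{\bm{u}(\bx_0)}_1$, giving an error of order $(1+\Lambda_R)\eps_R$ rather than $\eps_R$.

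To avoid that constant, I would first exploit the Chebyshev aliasing structure. At Chebyshev nodes each $T_j$ with $j>R$ aliases to some $\pm T_{j'}$ with $j'\le R$, so the tail interpolates to a polynomial whose coefficients are bounded in $\ell^1$ by $\sum_{j>R}|\beta_{i,j}|$. That would give $|\text{error}|\le 2\eps_R$, and the factor of two can be absorbed by redefining $\eps_R$ or reading the statement up to constants.

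If that bookkeeping fails to give exactly $\eps_R$, the fallback is to reinterpret (\ref{eq:decomp}) as the truncated expansion itself. That is, take the approximating vector to be $\bm{\Sigma}^{-1}$ applied to $[\hat g_i(\bx_0)]_i$, and check that this vector lies in the column space of $\tilde{\bm{C}}_R$. The check works because $\hat g_i(\bx_0) = \sum_k \bm{u}_k(\bx_0)\hat g_i(\tilde{\bx}_k)$. The remaining gap between $\hat g_i(\tilde{\bx}_k)$ and $K(\bx_i-\tilde{\bx}_k)$ is handled by the same tail estimate.
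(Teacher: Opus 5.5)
\textbf{There is a genuine gap: neither your proposal nor the paper's proof establishes the stated constant.} Your argument follows the paper's. Both take $\bm u(\bx_0)=\bm V^{-T}\bm t(\bx_0)$, split $\bm c(\bx_0)=\tilde{\bm C}_R\bm u(\bx_0)+\bm c_2$ with $\bS^{-1}\tilde{\bm C}_R\bm u(\bx_0)\in\mathrm{sp}(\bm Q)$, and bound $\norm[2]{\bm c_2}\le\sqrt n\,\eps_R$ entrywise. At the step you call the main obstacle, the paper simply asserts $(\bm c_2)_i=\sum_{j>R}\beta_{i,j}T_j(\bx_0)$, treating the interpolation residual as the truncation tail. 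Your suspicion is justified. The correct identity is
\begin{equation*}
  (\bm c_2)_i=\sum_{j>R}\beta_{i,j}\Bigl(T_j(\bx_0)-\sum_{k=1}^R\bm u_k(\bx_0)\,T_j(\tilde{\bx}_k)\Bigr),
\end{equation*}
and the aliased part does not vanish. Your aliasing argument is sound: at Chebyshev nodes each tail $T_j$ coincides with some $\pm T_{j'}$, $j'\le R$, or with $0$, including in the tensor-product case. It gives $|(\bm c_2)_i|\le 2\eps_R$, hence $\mathrm{dist}(\blam(\bx_0),\mathrm{sp}(\bm Q))\le 2\norm[2]{\bS^{-1}}\sqrt n\,\eps_R$; for general nodes, $1+\Lambda_R$ replaces $2$. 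That is a valid result, but not the stated one, and ``absorbing the factor'' changes the theorem rather than proving it.

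Your fallback does not remove the factor. The identity $\hat g_i(\bx_0)=\sum_k\bm u_k(\bx_0)\hat g_i(\tilde{\bx}_k)$ puts $[\hat g_i(\bx_0)]_i$ in the column space of $[\hat g_i(\tilde{\bx}_k)]_{i,k}$, not of $\tilde{\bm C}_R=[g_i(\tilde{\bx}_k)]_{i,k}$. So $\bS^{-1}[\hat g_i(\bx_0)]_i$ is not in $\mathrm{sp}(\bm Q)$ in general. Correcting the node values brings back exactly $\sum_k\bm u_k(\bx_0)(g_i-\hat g_i)(\tilde{\bx}_k)$, which is the aliased term again, and the corrected vector is just $\tilde{\bm C}_R\bm u(\bx_0)$ from your main route.

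No other choice of weights helps either. Take $R=1$ (constant basis function), the single Chebyshev node $0$ on $[-1,1]$, and $g_1(x)=2x^2$, $g_2(x)=2x^2-2$, both with tail sum $1$. Then $\tilde{\bm C}_R=(0,-2)^T$, $\bm c(1)=(2,0)^T$, and $\min_w\norm[2]{\bm c(1)-w\tilde{\bm C}_R}=2>\sqrt2\,\eps_R$. Both your proof and the paper's use only the tail-sum hypothesis, exhibit some $\bm w$ with $\norm[2]{\bm c(\bx_0)-\tilde{\bm C}_R\bm w}\le\sqrt n\,\eps_R$, and then multiply by $\norm[2]{\bS^{-1}}$. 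This example shows that structure cannot deliver the literal constant.

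What the argument does prove is one of two things. Either the factor-$2$ bound, or the stated bound verbatim if $\eps_R$ is redefined as $\max_i\sup_{\Dpred}|K(\bx_i-\cdot)-I_R[K(\bx_i-\cdot)]|$, with $I_R$ denoting interpolation at the nodes. You should state one of these explicitly.
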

\begin{toappendix}
\begin{proof}[Proof of Theorem \ref{thm:dist}]
Define $\bm t(\bx_0) = [T_j(\bx_0)]_{j=1}^R$, and $\bm u(\bx_0) = \bm V^{-T}\bm
t(\bx_0)$, and set $\bm c_1 = \tilde{\bm C}_R\,\bm u(\bx_0)$ and $\bm c_2
= \bm c(\bx_0) - \bm c_1$, so that $\bm c(\bx_0)=\bm c_1+\bm c_2$ and $\bm{c}_1
\in \mathrm{sp}(\tilde{\bm C}_R)$. Next, for each $1 \leq i \leq n$, note by
smoothness that we may expand $K(\bx_i - \bx)$, $\bx \in \Dpred$, as $K(\bx_i -
\bx)=\sum_{j=1}^\infty \beta_{i,j}\,T_{j}(\bx)$, and that 
\begin{equation*}
  |(\bm c_2)_i|
  =\Bigl|\sum_{j > R}\beta_{i,j}\,T_{j}(\bx_0)\Bigr|
  \le \sum_{j > R}\bigl|\beta_{i,j}\bigr|
  \le \eps_R.
\end{equation*}
Therefore we see that $\norm[2]{\bm c_2}\le \sqrt n\,\eps_R$. Now note that
$\blam(\bx_0)=\bm{\Sigma}^{-1}\bm c(\bx_0)=\bm{\Sigma}^{-1}\bm c_1+\bm{\Sigma}^{-1}\bm
c_2$. Since $\tilde{\bm\Lambda}_R =\bm Q\bm R$, $\bm{\Sigma}^{-1}\bm
c_1=\tilde{\bm\Lambda}_R \,\bm u(\bx_0)\in\mathrm{sp}(\bm Q)$. Hence
\begin{equation*}
  \mathrm{dist}\bigl(\blam(\bx_0),\mathrm{sp}(\bm Q)\bigr)
  \le \norm[2]{\bm{\Sigma}^{-1}\bm c_2}
  \le \norm[2]{\bm{\Sigma}^{-1}}\,\norm[2]{\bm c_2}
  \le \norm[2]{\bm{\Sigma}^{-1}}\,\sqrt n\,\eps_R,
\end{equation*}
completing the proof.
\end{proof}
\end{toappendix}
The key intuition of this result is that if the kernel function $K$ is smooth
away from the origin, by the ``dual-form" of the Kriging equation
(\ref{eq:krige-dual}) we see that the function $\bx_0 \mapsto \blam(\bx_0)$ is
smooth so long as the data locations are well-separated from $\bx_0$. In forming
an accurate basis expansion for the kernel function, then, we see that an
analogous rapidly converging expansion representation is given for $\bx_0
\mapsto \blam(\bx_0)$ and Kriging weights can be approximated to extremely high
accuracy uniformly on the domain.  In advance of the following results, we
establish the following notation for conditional means and variances for this
section. Recalling that $\bm{s} = \bm{Q}^T \by$ is our low-dimensional quantity,
define
\begin{align}
  \bm{\mu}_{f \sv y}
  &= E[\bm f \sv \by]
  = \bm C^T \bm{\Sigma}^{-1}\by,
  \label{eq:setup1}
  \\
  \bm{\Sigma}_{f \sv y}
  &= V[\bm f \sv \by]
  = K(\mathcal{S}', \mathcal{S}') - \bm C^T \bm{\Sigma}^{-1}\bm C,
  \label{eq:setup2}
  \\
  \bm{\mu}_{f \sv s}
  &= E[\bm f \sv \bm s]
  = \bm C^T \bm Q(\bm Q^T \bm{\Sigma}\bm Q)^{-1}\bm Q^T\by,
  \label{eq:setup3}
  \\
  \bm{\Sigma}_{f \sv s}
  &= V[\bm f \sv \bm s]
  = K(\mathcal{S}', \mathcal{S}')
  - \bm C^T \bm 
  Q(\bm Q^T \bm{\Sigma}\bm Q)^{-1}\bm Q^T \bm C.
  \label{eq:setup4}
\end{align}
Armed with this result and notation, we now provide a result quantifying the
more direct statistical question of how well conditional distributions are
approximated.

\begin{theorem} \label{thm:joint_efficiency}
  Under the setting of Theorem~\ref{thm:dist}, we have
  \begin{align*}
    \norm[\infty]{E[\bm f \sv \by] - E[\bm f \sv \bm \bs]}
    &\le
    \sqrt{n}
    \eps_R
    \sqrt{\kappa(\bS)
    \norm[2]{\bS^{-1}}}\,
    \sqrt{\by^T\bm{\Sigma}^{-1}\by},
    \\
    \norm[2]{V[\bm f \sv \by] - V[\bm f \sv \bs]}
    &\le
    m n \eps_R^2
    \,
    \kappa(\bS)
    \norm[2]{\bS^{-1}},
  \end{align*}
  where $\kappa(\bm{A}) = \norm[2]{\bm{A}} \norm[2]{\bm{A}^{-1}}$ is the
  matrix condition number. Consequently, we have
  \begin{equation*}
    D_{\mathrm{KL}}\!\bigl(\bm f \sv \by,\ \bm f \sv \bm \bs \bigr)
    \le
    \frac{m\,\kappa(\bm{\Sigma})\,n\,\eps_R^2}{2}\,
    \norm[2]{\bm\Sigma_{f \sv y}^{-1}}\,
    \norm[2]{\bm{\Sigma}^{-1}}\,
    \bigl(1 + \by^T \bm{\Sigma}^{-1} \by\bigr).
  \end{equation*}
\end{theorem}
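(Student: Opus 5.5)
The plan is to work in the whitened coordinates induced by $\bS$, where conditioning on $\bs = \bm{Q}^T\by$ becomes an orthogonal projection. Set $\bm{P} = \bS^{1/2}\bm{Q}(\bm{Q}^T\bS\bm{Q})^{-1}\bm{Q}^T\bS^{1/2}$, the $\bS$-orthogonal projector onto $\mathrm{sp}(\bS^{1/2}\bm{Q})$. Then
\begin{align*}
E[\bm f \sv \by] - E[\bm f \sv \bs] &= \bm{C}^T\bS^{-1/2}(\bm{I} - \bm{P})\bS^{-1/2}\by,\\
V[\bm f \sv \bs] - V[\bm f \sv \by] &= \bm{C}^T\bS^{-1/2}(\bm{I}-\bm{P})\bS^{-1/2}\bm{C}.
\end{align*}
The second quantity is positive semidefinite, which will matter for the KL step. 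For each column, $(\bm{I}-\bm{P})\bS^{-1/2}\bm{c}(\bx'_k) = (\bm{I}-\bm{P})\bS^{1/2}\blam(\bx'_k)$, and its norm is $\min_{\bv\in \mathrm{sp}(\bm Q)}\norm[2]{\bS^{1/2}(\blam(\bx'_k)-\bv)} \le \norm[2]{\bS}^{1/2}\,\mathrm{dist}(\blam(\bx'_k),\mathrm{sp}(\bm Q))$. Theorem~\ref{thm:dist} bounds this by $\norm[2]{\bS}^{1/2}\norm[2]{\bS^{-1}}\sqrt n\,\eps_R = \sqrt{n}\,\eps_R\sqrt{\kappa(\bS)\norm[2]{\bS^{-1}}}$.

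For the mean, the $k$th entry is an inner product of this residual with $(\bm I-\bm P)\bS^{-1/2}\by$, whose norm is at most $\sqrt{\by^T\bS^{-1}\by}$. Cauchy--Schwarz then gives the $\infty$-norm bound directly. For the variance, write the difference as $\bm{G}^T\bm{G}$ with $\bm G = (\bm I - \bm P)\bS^{-1/2}\bm C$. Then $\norm[2]{\bm G^T\bm G} = \norm[2]{\bm G}^2 \le \norm[F]{\bm G}^2 \le m\cdot n\eps_R^2\kappa(\bS)\norm[2]{\bS^{-1}}$, summing the column bounds over the $m$ columns.

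For the KL divergence, apply \eqref{eq:kl} with $\bS_{f\sv y}$ as the first covariance and $\bS_{f\sv s} = \bS_{f\sv y} + \bm{G}^T\bm{G}$ as the second. Since $\bS_{f\sv s} \succeq \bS_{f\sv y}$, the trace/log-det part $\mathrm{tr}(\bm B^{-1}\bm A) - m + \log\det(\bm B\bm A^{-1})$ can be controlled by writing $\bm{M} = \bS_{f\sv y}^{-1/2}\bm G^T\bm G\,\bS_{f\sv y}^{-1/2} \succeq 0$. The term becomes $\sum_i [\log(1+\mu_i) - \mu_i/(1+\mu_i)] \le \sum_i \mu_i = \mathrm{tr}\,\bm M \le m\norm[2]{\bm M}$. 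This in turn is at most $m\norm[2]{\bS_{f\sv y}^{-1}}\norm[2]{\bm G^T\bm G}$; a cruder bound via $\log(1+\mu)\le\mu$ suffices.

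The quadratic term is bounded by $\bm{\delta}^T\bS_{f\sv s}^{-1}\bm\delta \le \norm[2]{\bS_{f\sv y}^{-1}}\norm[2]{\bm\delta}^2$, using $\bS_{f\sv s}\succeq\bS_{f\sv y}$. Here $\norm[2]{\bm\delta}^2 \le \norm[2]{\bm G}^2\,\by^T\bS^{-1}\by$ because $\bm\delta = \bm G^T (\bm I-\bm P)\bS^{-1/2}\by$. I expect this to be the main obstacle: the mean error must be controlled in $\ell_2$ with an $m$ factor rather than only in $\ell_\infty$, and the Frobenius bound on $\bm G$ supplies exactly $m n\eps_R^2\kappa(\bS)\norm[2]{\bS^{-1}}$. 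Summing the two pieces and halving gives
\[
\tfrac12\, m n\eps_R^2\kappa(\bS)\norm[2]{\bS^{-1}}\norm[2]{\bS_{f\sv y}^{-1}}\,(1+\by^T\bS^{-1}\by),
\]
which is the claimed bound.
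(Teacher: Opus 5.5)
Your argument is the paper's proof written in whitened coordinates. Your $\bm P$ is really the Euclidean orthogonal projector onto $\mathrm{sp}(\bS^{1/2}\bm Q)$, i.e.\ the $\bS^{1/2}$-conjugate of the paper's $\bS$-orthogonal projector $\bm Q(\bm Q^T\bS\bm Q)^{-1}\bm Q^T\bS$, not itself ``$\bS$-orthogonal''. You use it correctly as a symmetric idempotent, so that is only a naming slip. The mean bound, the variance bound, and the quadratic term of the KL divergence all go through as you wrote them. Your $\ell_2$ mean control via $\norm[2]{\bm G}^2 \le \norm[F]{\bm G}^2$ is equivalent to the paper's use of $m\norm[\infty]{\cdot}^2$.

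The gap is in the trace/log-det part of the KL bound. You bound it by $\mathrm{tr}\,\bm M \le m\norm[2]{\bm M} \le m\norm[2]{\bS_{f\sv y}^{-1}}\norm[2]{\bm G^T\bm G}$. The only control you have on $\norm[2]{\bm G^T\bm G}$ is your variance bound $\norm[F]{\bm G}^2 \le m n\eps_R^2\kappa(\bS)\norm[2]{\bS^{-1}}$, which already carries a factor of $m$. Your chain therefore gives $m^2 n\eps_R^2\kappa(\bS)\norm[2]{\bS^{-1}}\norm[2]{\bS_{f\sv y}^{-1}}$ for this piece. The final display then has $(m+\by^T\bS^{-1}\by)$ in place of $(1+\by^T\bS^{-1}\by)$, so ``summing the two pieces and halving'' does not give the claimed bound. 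The loss is real and not just loose bookkeeping: $\norm[2]{\bm G^T\bm G}$ can be $m$ times the largest squared column norm, for instance when all columns are equal.

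The fix is to skip the spectral norm of $\bm M$ and bound the trace directly: $\mathrm{tr}\,\bm M = \mathrm{tr}(\bS_{f\sv y}^{-1}\bm G^T\bm G) \le \norm[2]{\bS_{f\sv y}^{-1}}\,\mathrm{tr}(\bm G^T\bm G) = \norm[2]{\bS_{f\sv y}^{-1}}\norm[F]{\bm G}^2$. This is exactly how the paper proceeds. It bounds the log-det/trace piece by $\norm[2]{\bS_{f\sv y}^{-1}}\,\mathrm{tr}(\bS_{f\sv s}-\bS_{f\sv y})$, then writes the trace as $\sum_k \bm r_k^T\bS\bm r_k$ and bounds each of the $m$ columns separately. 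With that one-line change your proof yields the stated KL inequality.
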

\begin{toappendix}
\begin{proof}[Proof of Theorem \ref{thm:joint_efficiency}]
For each $k\in\{1,\ldots,m\}$, write $\bm c_k=K(\mathcal S,\bx'_k)$ and
$\blam_k=\bm{\Sigma}^{-1}\bm c_k$.  Define
\begin{equation*}
  \hat{\blam}_k = \bm Q(\bm Q^T\bm{\Sigma}\bm Q)^{-1}\bm Q^T \bm c_k \in \mathrm{sp}(\bm Q),
  \qquad
  \bm r_k = \blam_k-\hat{\blam}_k.
\end{equation*}
By Equations (\ref{eq:setup1} - \ref{eq:setup4}), we have $(E[\bm f \sv \by]-E[\bm f \sv
\bm s])_k=\bm r_k^T\by$, and we observe that
\begin{equation*}
  |\bm r_k^T\by|
  = |\bm r_k^T \bm{\Sigma}^{1/2}\bm{\Sigma}^{-1/2}\by|
  \le \sqrt{\bm r_k^T\bm{\Sigma}\,\bm r_k}\ \sqrt{\by^T\bm{\Sigma}^{-1}\by}.
\end{equation*}
Since $\hat{\blam}_k\in\mathrm{sp}(\bm Q)$ is the $\bm{\Sigma}$-orthogonal
projection of $\blam_k$ onto $\mathrm{sp}(\bm Q)$,
\begin{equation*}
  \bm r_k^T\bm{\Sigma}\,\bm r_k
  = \min_{\bm w\in\mathrm{sp}(\bm Q)}(\bm w-\blam_k)^T\bm{\Sigma}(\bm w-\blam_k)
  \le \norm[2]{\bm{\Sigma}}\,\mathrm{dist}(\blam_k,\mathrm{sp}(\bm Q))^2.
\end{equation*}
Applying Theorem~\ref{thm:dist} gives $\mathrm{dist}(\blam_k,\mathrm{sp}(\bm Q))\le \norm[2]{\bm{\Sigma}^{-1}}\sqrt n\,\eps_R$, and therefore
\begin{equation*}
  |\bm r_k^T\by|
  \le
  \sqrt{\norm[2]{\bm{\Sigma}}}\,\norm[2]{\bm{\Sigma}^{-1}}\,\sqrt n\,\eps_R\,
  \sqrt{\by^T\bm{\Sigma}^{-1}\by}.
\end{equation*}
Taking the maximum over $k$ yields the stated $\ell_\infty$ bound on the conditional means.

For the conditional covariances, define the $\bm{\Sigma}$-orthogonal projector onto $\mathrm{sp}(\bm Q)$
\begin{equation*}
  \bm P = \bm Q(\bm Q^T\bm{\Sigma}\bm Q)^{-1}\bm Q^T\bm{\Sigma},
  \qquad
  \bm R_f = (\bm I-\bm P)\bm\Lambda = [\bm r_1\ \cdots\ \bm r_m].
\end{equation*}
Using $\bm C=\bm{\Sigma}\bm\Lambda$ and $\bm P^T\bm{\Sigma}=\bm{\Sigma}\bm P$,
\begin{align*}
  V[\bm f \sv \bm s]-V[\bm f \sv \by]
  &=
  \bm C^T\bm{\Sigma}^{-1}\bm C - \bm C^T \bm Q(\bm Q^T\bm{\Sigma}\bm Q)^{-1}\bm Q^T \bm C \\
  &=
  \bm\Lambda^T\bm{\Sigma}\,\bm\Lambda - \bm\Lambda^T\bm{\Sigma}\,\bm P\,\bm\Lambda
  =
  \bm\Lambda^T\bm{\Sigma}(\bm I-\bm P)\bm\Lambda
  =
  \bm R_f^T\bm{\Sigma}\,\bm R_f \succeq 0.
\end{align*}
Hence
\begin{equation*}
  \norm[2]{V[\bm f \sv \bm s]-V[\bm f \sv \by]}
  \le \mathrm{tr}\!\bigl(\bm R_f^T\bm{\Sigma}\,\bm R_f\bigr)
  = \sum_{k=1}^m\bm r_k^T\bm{\Sigma}\,\bm r_k
  \le m\,\norm[2]{\bm{\Sigma}}\,\norm[2]{\bm{\Sigma}^{-1}}^2\,n\,\eps_R^2,
\end{equation*}
which gives the stated bound on conditional variance approximation.

For the KL divergence, note $\bm\Sigma_{f \sv s}=\bm\Sigma_{f \sv
y}+(\bm\Sigma_{f \sv s}-\bm\Sigma_{f \sv y})$ with $\bm\Sigma_{f \sv
s}-\bm\Sigma_{f \sv y}\succeq 0$. Then by observing that $\bm\Sigma_{f \sv
s}^{-1}\preceq \bm\Sigma_{f \sv y}^{-1}$, and noting that $\log\det(\bm I+\bm
A)\le \mathrm{tr}(\bm A)$ for $\bm A\succeq 0$, we have that
\begin{align*}
  D_{\mathrm{KL}}\!\bigl(\bm f \sv \by,\ \bm f \sv \bm s\bigr)
  &=
  \tfrac12\Bigl(
    \mathrm{tr}(\bm\Sigma_{f \sv s}^{-1}\bm\Sigma_{f \sv y}) - m
    + (\bm\mu_{f \sv y}-\bm\mu_{f \sv s})^T\bm\Sigma_{f \sv s}^{-1}(\bm\mu_{f \sv y}-\bm\mu_{f \sv s})
    + \log\frac{\det\bm\Sigma_{f \sv s}}{\det\bm\Sigma_{f \sv y}}
  \Bigr) \\
  &\le
  \tfrac12\Bigl(
    \norm[2]{\bm\Sigma_{f \sv y}^{-1}}\,\norm[2]{\bm\mu_{f \sv y}-\bm\mu_{f \sv s}}^2
    + \norm[2]{\bm\Sigma_{f \sv y}^{-1}}\,\mathrm{tr}(\bm\Sigma_{f \sv s}-\bm\Sigma_{f \sv y})
  \Bigr).
\end{align*}
Using $\norm[2]{\bm\mu_{f \sv y}-\bm\mu_{f \sv s}}^2\le m\norm[\infty]{\bm\mu_{f \sv y}-\bm\mu_{f \sv s}}^2$ and the two bounds proved above yields the stated KL bound.
\end{proof}
\end{toappendix}
Several comments about the above results are in order. First, we note that if
$\eps_R$ is small, then the column space of $\tilde{\bm{C}}_R$ provides a highly
accurate approximation to that of $\bm{C}$, and so if $\tilde{\bm{\Lambda}}_R$
is approximated to machine precision and $\norm[2]{\bm{\Sigma}^{-1}}$ is bounded
then these efficiency bounds demonstrate conditioning on $\bm{s} = \bm{Q}^T \by$
to be effectively machine-precision equivalent to conditioning on $\by$. What is
particularly notable about this is that these bounds are \emph{sharper} as, for
example, measurement noise variance $\tau^2$ increases, since
$\norm[2]{\bm{\Sigma}^{-1}}$ is controlled more tightly. This feature can be seen in
Figure \ref{fig:intro_knn_compare}: while measurement noise worsens the
efficiency of nearest neighbor-based prediction, it actually \emph{improves} the
accuracy of this formulation by improving on the condition number
$\kappa(\bm{\Sigma})$.

To close this section, we provide a closely related result related to an
incomplete factorization $\tilde{\bm{\Lambda}}_R$ as
\begin{equation} \label{eq:Lam_generic}
  \tilde{\bm{\Lambda}} 
  = 
  \bm{\Sigma}^{-1} \tilde{\bm{C}} 
  = 
  \mat{\bm{Q} & \bm{Q}_{\perp}} \bm{R}, 
\end{equation}
where $\bm{Q} \in \R^{n \times r}$ is presumed to be a basis for the dominant
$r$ singular vectors of $\bm{\Lambda}$. Such a $\bm{Q}$ can be rapidly extracted
in settings where linear systems pertaining to $\bm{\Sigma}$ can be solved
rapidly using randomized linear algebra routines, which will be discussed in
detail in the next section.

\begin{theorem} \label{thm:generic_efficiency}
  Let data $\by$ be measured at locations $\mathcal{S}$ as above, and let
  $\tilde{\bm{\Lambda}}$ be factored as in (\ref{eq:Lam_generic}), so that
  $\bm{Q}$ is simply a basis for its first $r$ singular vectors. Then
  \begin{align*} 
    \norm[2]{E[\bm{f} \sv \by] - E[\bm{f} \sv \bs]}
    &\le
    \sqrt{m}\,\sqrt{\norm[2]{\bm{\Sigma}}}\,\sigma_{r+1}(\tilde{\bm{\Lambda}})\,
    \sqrt{\by^T\bm{\Sigma}^{-1}\by}
    \\
    \norm[2]{V[\bm{f} \sv \by] - V[\bm{f} \sv \bs]}
    &\le
    m\,\norm[2]{\bm{\Sigma}}\,\sigma_{r+1}(\tilde{\bm{\Lambda}})^2.
  \end{align*}
  Consequently, we have
  \begin{equation*}
    D_{\mathrm{KL}}\!\bigl(\bm f \sv \by,\ \bm f \sv \bm \bs\bigr)
    \le
    \frac{m}{2}\,\norm[2]{\bm\Sigma_{f \sv y}^{-1}}\,
    \norm[2]{\bm{\Sigma}}\,\sigma_{r+1}(\tilde{\bm{\Lambda}})^2
    \bigl(1+\by^T\bm{\Sigma}^{-1}\by\bigr).
  \end{equation*}
\end{theorem}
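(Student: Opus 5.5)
The plan is to reuse the argument in the proof of Theorem~\ref{thm:joint_efficiency} almost unchanged. The only new ingredient is a replacement for Theorem~\ref{thm:dist}: a bound on $\mathrm{dist}(\blam_k, \mathrm{sp}(\bm{Q}))$ in terms of $\sigma_{r+1}(\tilde{\bm{\Lambda}})$ rather than $\eps_R$. I will read the statement as applying to prediction points whose Kriging weights are columns of $\tilde{\bm{\Lambda}}$. Equivalently, $\bm{\Lambda} = \tilde{\bm{\Lambda}}\bm{W}$ with a selection (or norm-one-column) matrix $\bm{W}$. For example, $\Spred$ could be taken among the oversampled nodes $\set{\tilde{\bx}_i}$, which is the regime the improper-node discussion has in mind.

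Under that reading, the first step is elementary. If $\bm{Q}$ spans the top $r$ left singular vectors of $\tilde{\bm{\Lambda}}$, then by Eckart--Young
\begin{equation*}
  \norm[2]{(\bm{I} - \bm{Q}\bm{Q}^T)\tilde{\bm{\Lambda}}} = \sigma_{r+1}(\tilde{\bm{\Lambda}}).
\end{equation*}
Hence every column $\blam_k = \tilde{\bm{\Lambda}}\bm{e}_k$ satisfies $\mathrm{dist}(\blam_k, \mathrm{sp}(\bm{Q})) \le \sigma_{r+1}(\tilde{\bm{\Lambda}})$. More generally, each column of $\tilde{\bm{\Lambda}}\bm{W}$ with $\norm[2]{\bm{W}\bm{e}_k}\le 1$ satisfies the same bound.

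For the means, I will define $\hat{\blam}_k = \bm{Q}(\bm{Q}^T\bm{\Sigma}\bm{Q})^{-1}\bm{Q}^T\bm{c}_k$ and $\bm{r}_k = \blam_k - \hat{\blam}_k$ exactly as in the proof of Theorem~\ref{thm:joint_efficiency}. Cauchy--Schwarz in the $\bm{\Sigma}$ inner product gives $|\bm{r}_k^T\by| \le \sqrt{\bm{r}_k^T\bm{\Sigma}\bm{r}_k}\,\sqrt{\by^T\bm{\Sigma}^{-1}\by}$. The $\bm{\Sigma}$-orthogonal projection minimizes $\bm{\Sigma}$-energy, so
\begin{equation*}
  \bm{r}_k^T\bm{\Sigma}\bm{r}_k \le \norm[2]{\bm{\Sigma}}\,\mathrm{dist}(\blam_k,\mathrm{sp}(\bm{Q}))^2 \le \norm[2]{\bm{\Sigma}}\,\sigma_{r+1}^2 .
\end{equation*}
Summing the squares of the $m$ entries then yields the stated $\ell_2$ bound, including its factor $\sqrt{m}$.

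For the covariances, the same identity as before gives $V[\bm{f}\sv\bs] - V[\bm{f}\sv\by] = \bm{R}_f^T\bm{\Sigma}\bm{R}_f \succeq 0$. Its spectral norm is bounded by its trace, $\sum_k \bm{r}_k^T\bm{\Sigma}\bm{r}_k \le m\norm[2]{\bm{\Sigma}}\sigma_{r+1}^2$. The KL bound then follows verbatim from the last display in the proof of Theorem~\ref{thm:joint_efficiency}, using $\bm{\Sigma}_{f\sv s}^{-1} \preceq \bm{\Sigma}_{f\sv y}^{-1}$ and $\log\det(\bm{I}+\bm{A}) \le \mathrm{tr}(\bm{A})$. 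The mean term contributes $\norm[2]{\bm{\Sigma}_{f\sv y}^{-1}}\, m\norm[2]{\bm{\Sigma}}\sigma_{r+1}^2\,\by^T\bm{\Sigma}^{-1}\by$, and the trace term contributes $\norm[2]{\bm{\Sigma}_{f\sv y}^{-1}}\, m\norm[2]{\bm{\Sigma}}\sigma_{r+1}^2$.

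The main obstacle is the first step: connecting the Kriging weights of arbitrary prediction points to the columns of $\tilde{\bm{\Lambda}}$. For genuinely arbitrary $\bx'_k \in \Dpred$, one would write $\blam_k \approx \tilde{\bm{\Lambda}}\bm{u}_k$ with an interpolation vector $\bm{u}_k$. This introduces a factor $\norm[2]{\bm{u}_k}$, controlled by the condition number of the internal interpolation matrix mentioned in the discussion, plus an $\eps_R$-type remainder. To match the statement as given, I will adopt the column-selection reading above, where that factor is at most one.
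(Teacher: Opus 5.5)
Your proposal is correct and follows the paper's proof essentially step for step. The paper makes the same column-selection reading you adopt: it writes $\tilde{\bm{\Lambda}} = [\tilde{\blam}_1\ \cdots\ \tilde{\blam}_m]$ and sets $\bm{R}_f = (\bm{I}-\bm{P})\tilde{\bm{\Lambda}}$. It then bounds $\bm{r}_k^T\bm{\Sigma}\bm{r}_k \le \norm[2]{\bm{\Sigma}}\,\norm[2]{(\bm{I}-\bm{Q}\bm{Q}^T)\tilde{\blam}_k}^2 \le \norm[2]{\bm{\Sigma}}\,\sigma_{r+1}(\tilde{\bm{\Lambda}})^2$ via $(\bm{I}-\bm{P})\tilde{\blam}_k = (\bm{I}-\bm{P})(\bm{I}-\bm{Q}\bm{Q}^T)\tilde{\blam}_k$, which is your $\mathrm{dist}$-based inequality in another form, and its trace bound and final KL step coincide with yours.
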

\begin{toappendix}
\begin{proof}[Proof of Theorem \ref{thm:generic_efficiency}]
Define the $\bm\Sigma$-orthogonal projector onto $\mathrm{sp}(\bm Q)$ by
$
  \bm P = \bm Q(\bm Q^T\bm\Sigma\bm Q)^{-1}\bm Q^T\bm\Sigma.
$
Writing $\tilde{\bm\Lambda}=[\tilde{\blam}_1\ \cdots\ \tilde{\blam}_m]$ and
$\bm R_f=(\bm I-\bm P)\tilde{\bm\Lambda}=[\bm r_1\ \cdots\ \bm r_m]$, the same
algebra as in Theorem~\ref{thm:joint_efficiency} gives
\begin{equation*}
  E[\bm f \sv \by]-E[\bm f \sv \bm s]=\bm R_f^T\by,
  \qquad
  V[\bm f \sv \bm s]-V[\bm f \sv \by]=\bm R_f^T\bm\Sigma\,\bm R_f \succeq 0.
\end{equation*}
For the mean bound, for each $k$,
\begin{equation*}
  |\bm r_k^T\by|
  = |\bm r_k^T \bm\Sigma^{1/2}\bm\Sigma^{-1/2}\by|
  \le \sqrt{\bm r_k^T\bm\Sigma\,\bm r_k}\ \sqrt{\by^T\bm\Sigma^{-1}\by}.
\end{equation*}
Using the identity $(\bm I-\bm P)\tilde{\blam}_k = (\bm I-\bm P)(\bm I-\bm Q\bm
Q^T)\tilde{\blam}_k$, we have
\begin{equation*}
  \bm r_k^T\bm\Sigma\,\bm r_k 
  \le \norm[2]{\bm\Sigma} \norm[2]{(\bm I-\bm Q\bm Q^T)\tilde{\blam}_k}^2
  \le \norm[2]{\bm\Sigma} \sigma_{r+1}(\tilde{\bm{\Lambda}})^2,
\end{equation*}
and therefore
\begin{equation*}
  |\bm r_k^T\by|
  \le \sqrt{\norm[2]{\bm\Sigma}}\,\sigma_{r+1}(\tilde{\bm{\Lambda}})\,\sqrt{\by^T\bm\Sigma^{-1}\by}.
\end{equation*} 
Summing over $k$ yields
\begin{equation*}
  \norm[2]{E[\bm f \sv \by]-E[\bm f \sv \bm s]}^2
  = \sum_{k=1}^m (\bm r_k^T\by)^2
  \le m\,\norm[2]{\bm\Sigma}\,\sigma_{r+1}(\tilde{\bm\Lambda})^2\,\by^T\bm\Sigma^{-1}\by,
\end{equation*}
which proves the conditional mean bound.

For the variance bound, using $V[\bm f \sv \bm s]-V[\bm f \sv \by]=\bm R_f^T\bm\Sigma\,\bm R_f\succeq 0$,
\begin{equation*}
  \norm[2]{V[\bm f \sv \by]-V[\bm f \sv \bm s]}
  = \norm[2]{\bm R_f^T\bm\Sigma\,\bm R_f}
  \le \mathrm{tr}(\bm R_f^T\bm\Sigma\,\bm R_f)
  = \sum_{k=1}^m \bm r_k^T\bm\Sigma\,\bm r_k
  \le m\,\norm[2]{\bm\Sigma}\,\sigma_{r+1}(\tilde{\bm\Lambda})^2,
\end{equation*}
which proves the conditional covariance bound.

Finally, the KL bound follows by the same final step as in Theorem~\ref{thm:joint_efficiency}:
since $\bm\Sigma_{f \sv s}-\bm\Sigma_{f \sv y}\succeq 0$, we have $\bm\Sigma_{f \sv s}^{-1}\preceq \bm\Sigma_{f \sv y}^{-1}$ and
\begin{equation*}
  D_{\mathrm{KL}}\!\bigl(\bm f \sv \by,\ \bm f \sv \bm s\bigr)
  \le \tfrac12\Bigl(
    \norm[2]{\bm\Sigma_{f \sv y}^{-1}}\,\norm[2]{\bm\mu_{f \sv y}-\bm\mu_{f \sv s}}^2
    + \norm[2]{\bm\Sigma_{f \sv y}^{-1}}\,\mathrm{tr}(\bm\Sigma_{f \sv s}-\bm\Sigma_{f \sv y})
  \Bigr).
\end{equation*}
Using $\norm[2]{\bm\mu_{f \sv y}-\bm\mu_{f \sv s}}^2 \le m\,\norm[\infty]{\bm\mu_{f \sv y}-\bm\mu_{f \sv s}}^2$,
the mean bound above, and $\mathrm{tr}(\bm\Sigma_{f\mid s}-\bm\Sigma_{f\mid y})
= \mathrm{tr}(\bm R_f^T\bm\Sigma\,\bm R_f)\le m\,\norm[2]{\bm\Sigma}\,\sigma_{r+1}(\tilde{\bm\Lambda})^2$
yields the displayed KL inequality.
\end{proof}
\end{toappendix}

\subsection{Efficiency analysis with arbitrary interpolation points}
\label{sec:generic_error_theory}

In the above analysis, it is presumed that the interpolation nodes
$\set{\tilde{\bx}_j}_{j=1}^R$ are selected to be, e.g., Chebyshev nodes. In
practice, however, it may be the case that $\Dpred$ has an unusual structure,
and so optimal or near-optimal interpolating basis functions $\set{T_{j}}$ and
nodes $\set{\tilde{\bx}_j}_{j=1}^R$ may not be convenient or feasible to
compute. A practical approach that we expect will be common to remedy this issue
is to heavily \emph{oversample} on $\Dpred$ in a sub-optimal way, for example
with a very dense regular grid, and using partial factorization methods as were
the setting of Theorem \ref{thm:generic_efficiency}. Define
$\set{\bx^g_l}_{l=1}^{L}$ to be such a (presumably oversampled) collection of
points. In the following result, we show that the efficiency bounds of Theorem
\ref{thm:joint_efficiency} can still be achieved, but with the balancing
multiplicative penalty of $\kappa^g_R$, the condition number of the
interpolation matrix from $\set{\bx^g_l}_{l=1}^{L}$ to
$\set{\tilde{\bx}_j}_{j=1}^R$.
\begin{theorem} \label{thm:kappa_efficiency}
Let $\set{\bx^g_l}_{l=1}^{L} \subset \Dpred$ be an arbitrary collection of
points such that the interpolation matrix $\bm{U}^g$, defined by
\begin{equation*} 
  \bm{U}^g = \bm E^g \bm V^{-1},
  \qquad
  \bm{V}_{k,\ell}=T_\ell(\tilde{\bx}_k),
  \qquad
  \bm{E}^g_{l,\ell}=T_\ell(\bx^g_l),
\end{equation*}
has full column rank. Next, define the analogous partial decomposition 
\begin{equation} \label{eq:general_lowrank_lam}
  \tilde{\bm{\Lambda}}_g = \bm{\Sigma}^{-1} \tilde{\bm{C}}_g = 
  \mat{\bm{Q} & \bm{Q}_{\perp}} \bm{R},
\end{equation}
where $\tilde{\bm{C}}_g = [K(\bx_j - \bx^g_l)]_{j,l=1}^{n,L}$ is the
cross-covariance between given data locations $\set{\bx_j}_{j=1}^n$ and the
general interpolation points $\set{\bx^g_l}$ and $\bm{Q} \in \R^{n \times r}$ is
presumed to be a basis for the first $r$ singular vectors of
$\tilde{\bm{\Lambda}}_g$. Then letting $\bm{s} = \bm{Q}^T \by$,
\begin{equation*} 
\begin{aligned}
  \norm[2]{E[\bm f \sv \by] - E[\bm f \sv \bm{s}]}
  &\le
  \kappa_R^g
  \sqrt{m}\,\sqrt{\norm[2]{\bm{\Sigma}}}\,\sigma_{r+1}(\tilde{\bm{\Lambda}}_R)\,
  \sqrt{\by^T\bm{\Sigma}^{-1}\by},
  \\
  \norm[2]{V[\bm f \sv \by] - V[\bm f \sv \bm{s}]}
  &\le
  (\kappa_R^g)^2
  m\,\norm[2]{\bm{\Sigma}}\,\sigma_{r+1}(\tilde{\bm{\Lambda}}_R)^2,
\end{aligned}
\end{equation*}
where $\kappa^g_R = \sigma_1(\bm{U}^g)/\sigma_R(\bm{U}^g)$ is the condition
number for recovering the proper interpolation points
$\set{\tilde{\bx}_k}_{k=1}^{R}$ from the given nodes $\set{\bx^g_l}_{l=1}^{L}$.
Using the same shorthand notation as above, the KL-divergence between $\bm{f}
\sv \by$ and $\bm{f} \sv \bm{s}$ is bounded as
\begin{equation*} 
  D_{\mathrm{KL}}\!\bigl(\bm f \sv \by,\ \bm f \sv \bm{s} \bigr)
  \le
  \frac{m}{2}\,\norm[2]{\bm\Sigma_{f \sv y}^{-1}}\,
  \norm[2]{\bm{\Sigma}}\,(\kappa_R^g)^2\,\sigma_{r+1}(\tilde{\bm{\Lambda}}_R)^2
  \bigl(1+\by^T\bm{\Sigma}^{-1}\by\bigr).
\end{equation*}
\end{theorem}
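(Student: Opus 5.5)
We would reduce Theorem \ref{thm:kappa_efficiency} to the algebra already carried out in the proof of Theorem \ref{thm:generic_efficiency}. As there, set $\bm P = \bm Q(\bm Q^T\bm\Sigma\bm Q)^{-1}\bm Q^T\bm\Sigma$, the $\bm\Sigma$-orthogonal projector onto $\mathrm{sp}(\bm Q)$, where $\bm Q$ now comes from $\tilde{\bm\Lambda}_g$. Let $\bm R_f=(\bm I-\bm P)\bm\Lambda$ with $\bm\Lambda=\bm\Sigma^{-1}\bm C$. Then
\begin{equation*}
E[\bm f\sv\by]-E[\bm f\sv\bs]=\bm R_f^T\by,
\qquad
V[\bm f\sv\bs]-V[\bm f\sv\by]=\bm R_f^T\bm\Sigma\bm R_f\succeq 0 .
\end{equation*}
Cauchy--Schwarz in the $\bm\Sigma$ inner product and the trace bound then show that everything reduces to a uniform bound of the form
\begin{equation*}
\bm r_k^T\bm\Sigma\bm r_k\le \norm[2]{\bm\Sigma}\,\norm[2]{(\bm I-\bm Q\bm Q^T)\blam_k}^2 ,
\qquad
\norm[2]{(\bm I-\bm Q\bm Q^T)\blam_k}\le \kappa_R^g\,\sigma_{r+1}(\tilde{\bm\Lambda}_R).
\end{equation*}
Once that bound holds, the mean, variance and KL statements follow verbatim as in Theorem \ref{thm:generic_efficiency}, with $\sigma_{r+1}$ replaced by $\kappa^g_R\sigma_{r+1}(\tilde{\bm\Lambda}_R)$.

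The key step is to relate $\tilde{\bm\Lambda}_g$ to $\tilde{\bm\Lambda}_R$. Using the interpolation identity (\ref{eq:C_approx}) applied at the oversampled points, in the idealized exact-expansion setting of this section we have
\begin{equation*}
\tilde{\bm C}_g=\tilde{\bm C}_R(\bm U^g)^T,
\qquad\text{hence}\qquad
\tilde{\bm\Lambda}_g=\tilde{\bm\Lambda}_R(\bm U^g)^T .
\end{equation*}
Since $\bm U^g$ has full column rank, it has a left inverse $(\bm U^g)^+$ with $\norm[2]{(\bm U^g)^+}=1/\sigma_R(\bm U^g)$. This gives $\tilde{\bm\Lambda}_R=\tilde{\bm\Lambda}_g\,((\bm U^g)^+)^T$. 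Singular value perturbation for products, $\sigma_{j}(\bm A\bm B)\le\sigma_j(\bm A)\norm[2]{\bm B}$, applied both ways yields
\begin{equation*}
\sigma_{r+1}(\tilde{\bm\Lambda}_g)\le\sigma_1(\bm U^g)\,\sigma_{r+1}(\tilde{\bm\Lambda}_R),
\end{equation*}
and also shows that both matrices share a column space.

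Next we control $\blam_k$ for an arbitrary prediction point. By (\ref{eq:decomp}), $\blam_k=\tilde{\bm\Lambda}_R\bm u_k=\tilde{\bm\Lambda}_g((\bm U^g)^+)^T\bm u_k$. Here $\bm u_k$ is the row of the interpolation operator for $\bx'_k$, and its size we would bound using a normalization of the Chebyshev interpolation (Lebesgue-type constant absorbed; or, more cleanly, by treating the $\bm U^g$ rows as $O(1)$ so that the relevant scale is $\sigma_1(\bm U^g)$). Then
\begin{equation*}
\norm[2]{(\bm I-\bm Q\bm Q^T)\blam_k}\le\sigma_{r+1}(\tilde{\bm\Lambda}_g)\,\norm[2]{(\bm U^g)^+}\,\norm[2]{\bm u_k}
\le\frac{\sigma_1(\bm U^g)}{\sigma_R(\bm U^g)}\,\sigma_{r+1}(\tilde{\bm\Lambda}_R)\,\norm[2]{\bm u_k},
\end{equation*}
which produces the factor $\kappa_R^g$.

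The main obstacle is this normalization of $\norm[2]{\bm u_k}$. The stated bound has no extra factor, so we would either argue that $\bm u_k$ can be taken with norm at most one under the paper's convention $|T_j|\le1$ after suitable scaling of $\bm V$, or fold it into the interpretation of $\sigma_{r+1}(\tilde{\bm\Lambda}_R)$. If necessary, we would instead bound columns by taking $\bx'_k$ to be one of the proper nodes, where $\bm u_k$ is a unit vector. We would also note that the expansion error $\eps_R$ is suppressed here, exactly as in the machine-precision assumption of Theorem \ref{thm:dist}.
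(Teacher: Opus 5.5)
Your proposal follows the paper's proof essentially step for step: it factors $\tilde{\bm\Lambda}_g=\tilde{\bm\Lambda}_R(\bm U^g)^T$, recovers $\tilde{\bm\Lambda}_R$ via the pseudoinverse, uses $\sigma_{r+1}(\tilde{\bm\Lambda}_g)\le\sigma_1(\bm U^g)\,\sigma_{r+1}(\tilde{\bm\Lambda}_R)$, and then reruns the algebra of Theorem~\ref{thm:generic_efficiency} with the inflated residual bound. The paper handles the $\norm[2]{\bm u_k}$ obstacle you flag exactly by your last fallback: its residuals are $(\bm I-\bm P)\tilde{\blam}^R_k$ for the columns of $\tilde{\bm\Lambda}_R$ themselves, so $\bm u_k$ is a standard basis vector. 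Your first fallback would fail, because $\bm u_k=\bm V^{-T}\bm t(\bx'_k)$ is the vector of Lagrange cardinal values and does not change under any rescaling of $\set{T_j}$; for genuinely arbitrary $\bx'_k$ this line of argument therefore carries an extra factor $\max_k\norm[2]{\bm u_k}$.
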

\begin{toappendix}
\begin{proof}[Proof of Theorem \ref{thm:kappa_efficiency}]
By the same interpolation construction as in
Section~\ref{sec:low-rank-crosscov}, $\tilde{\bm C}_g = \tilde{\bm C}_R\,(\bm
U^g)^T$, and so
$
  \tilde{\bm\Lambda}_g
  = \bm\Sigma^{-1}\tilde{\bm C}_g
  = \tilde{\bm\Lambda}_R\,(\bm U^g)^T.
$
Additionally, by the assumption that $\bm{U}^g$ is full rank, we can multiply by
the pseudo-inverse $((\bm U^g)^T)^{\dagger}$ to obtain
$
  \tilde{\bm\Lambda}_R = \tilde{\bm\Lambda}_g\,((\bm U^g)^T)^{\dagger}.
$
Now, let $\bm P = \bm Q (\bm Q^T\bS \bm Q)^{-1}\bm Q^T \bS$ be the
$\bm\Sigma$-orthogonal projector onto $\mathrm{sp}(\bm Q)$, and let $\bm r_k =
(\bm I-\bm P)\tilde{\blam}^R_k$ denote the columns of the residual matrix $\bm
R_f$, where $\tilde{\blam}_k^R$ is the $k$-th column of $\tilde{\bm{\Lambda}}_R$. As in
Theorem~\ref{thm:generic_efficiency}, we see that
\begin{equation*} 
\bm r_k^T\bm\Sigma\,\bm r_k \le \norm[2]{\bm\Sigma}
\norm[2]{(\bm I-\bm Q\bm Q^T)\tilde{\blam}^R_k}^2 \le \norm[2]{\bm\Sigma}
\norm[2]{(\bm I-\bm Q\bm Q^T)\tilde{\bm\Lambda}_R}^2.
\end{equation*}
This Euclidean truncation error is bounded as
\begin{align*}
  \norm[2]{(\bm I-\bm Q\bm Q^T)\tilde{\bm\Lambda}_R} 
  \leq 
  \norm[2]{(\bm I-\bm Q\bm Q^T)\tilde{\bm\Lambda}_g} \, \norm[2]{((\bm U^g)^T)^{\dagger}} 
  \leq 
  \sigma_{r+1}(\tilde{\bm\Lambda}_g)\,\frac{1}{\sigma_R(\bm U^g)}
  \leq
  \kappa_R^g \sigma_{r+1}(\tilde{\bm{\Lambda}}_R).
\end{align*}
Following the remaining steps of Theorem~\ref{thm:generic_efficiency} with this
looser bound on $\bm r_k^T\bm\Sigma\,\bm r_k$ yields the results.
\end{proof}
\end{toappendix}

Consider, as an intuitive example, the case where $\set{\bx^g_j}$ are a very
dense regular grid on $\Dpred$. This result indicates that when the fineness of
this grid is such that there are grid points very near to the proper
interpolation points $\set{\tilde{\bx}_j}_{j=1}^R$, the condition number will be
small as the column space of that $\tilde{\bm{C}}_R$ will be well-represented
with a minimal sharpness cost. As $R$ is increased, however, it may be that a
dense regular grid of points (or something less regular that has insufficient
coverage on parts of the domain) will lead to an enormous $\kappa^g_R$, and thus
the bound above will not be informative.

\section{Accelerated machine-precision conditioning} \label{sec:computation}

As the last section explains and the next section will numerically demonstrate,
particularly for structured prediction problems in which the prediction domain
is a large well-behaved region there are very low-dimensional random variables
that contain effectively \emph{all} of the relevant conditioning information of
a given data vector $\by$. And importantly, since this variety of dimension
reduction for prediction depends only on smoothness properties of the kernel
away from the origin, it is not impacted by measurement noise and other features
that can make $k$-nearest neighbor prediction much less effective. In this
section, we will now discuss and survey computational procedures to accelerate
the extraction of these low-dimensional conditioning variables, as well as the
necessary pre-computations for fast online evaluation of conditional
distributions in $\bO(1)$ time.

In section \ref{sec:fastQ}, we provide a discussion of the use of either
partially pivoted factorizations or randomized algorithms to rapidly extract the
matrix $\bm{Q}$ of dominant singular vectors of $\tilde{\bm{\Lambda}}$ in
settings where linear solves $\bv \mapsto \bS^{-1} \bv$ can be computed quickly.
Then, in Section \ref{sec:online}, we will discuss the necessary subsequent
precomputations for $\bO(1)$-cost online conditional distributions.

\subsection{Rapid assembly of $\bm{Q}$} \label{sec:fastQ}

As established in the previous section, if one can select reasonable
interpolation nodes $\set{\tilde{\bx}_k}_{k=1}^m$ and extract the dominant
singular vector space of $\tilde{\bm{\Lambda}} = \bS^{-1} \tilde{\bm{C}}$ so
that $\norm[2]{\tilde{\bm{\Lambda}} - \bm{Q} \bm{R}} \approx
\sigma_{k+1}(\tilde{\bm{\Lambda}})$, then one can achieve very accurate
approximations to the conditional distribution $f(\bx) \sv \by$ with $f(\bx) \sv
\bm{Q}^T \by$ for $\bx \in \Dpred$. The purpose of this section is to describe
two procedures for doing so and outline the associated challenges and potential
design choices.

Let us first assume that $\tilde{\bm{C}}$ is sufficiently rank-deficient that it
can be factorized to machine precision as $\tilde{\bm{C}} \approx \bm{U}_c
\bm{V}_c^T$. In the setting where its rank is manageable, there are several
appealing options for rapidly obtaining such a factorization. If the covariance
function $K$ is easy to evaluate directly such that individual indexing of
$\tilde{\bm{C}}$ is fast, a greedy pivoting factorization such as the
\emph{adaptive cross approximation} \cite{bebendorf2000} can be used to obtain
the factorization in $\bO(\max(n, m) r^2)$, where $r$ is the (numerical) rank of
$\tilde{\bm{C}}$. Alternatively, if matrix-vector products with $\tilde{\bm{C}}$
are more convenient than individual indexing, a randomized factorization
approach can be used instead \cite{halko2011,tropp2023}. In particular, letting
$\bm{\Omega} \in \R^{n \times (r+p)}$ with $\bm{\Omega}_{j,k} \sim N(0,1)$ be
the \emph{sketching matrix} with oversampling parameter $p\in\mathbb{N}$, one
can obtain a near-optimal rank-$r$ approximation to $\tilde{\bm{C}}$ from within
the dominant subspaces of $\tilde{\bm{C}} \bm{\Omega}$ with very favorable
control of sub-optimality due to randomization \cite{halko2011,tropp2023}. Since
that matrix has $r + p$ columns, that method will cost $\bO(T r^2)$ work plus a
potential small additional cost for converting from one low-rank representation
to another (e.g., converting an unstructured $\bm{U}_c \bm{V}_c^T$ to a partial
SVD) \cite{halko2011}, where $T$ is the cost of the matrix-vector product $\bv
\mapsto \tilde{\bm{C}} \bv$.

Continuing in the setting where the low-rank approximation $\tilde{\bm{C}}
\approx \bm{U}_c \bm{V}_c^T$ can be directly formed to machine precision, the
next step involves re-factorizing the matrix $\bS^{-1} \bm{U}_c \bm{V}_c^T =
\bm{Q} \bm{R}$ to obtain an orthonormal basis for its non-degenerate column
space. The dominant cost of this task will be the linear solves $\bv \mapsto
\bS^{-1} \bv$. In the case where the covariance function $K$ is globally smooth,
such as when $K$ is the Gaussian or squared exponential covariance function,
$K(\mathcal{S}, \mathcal{S})$ will have low global rank, and so one of the
methods described in the last section can be applied just the same to obtain an
approximation like $\bS = K(\mathcal{S}, \mathcal{S}) + \tau^2 \I \approx \bm{G}
\bm{G}^T + \tau^2 \I$. At that point, $\bS^{-1} \bm{U}_c$ can be computed
rapidly using the Sherman-Morrison-Woodbury formula, and a standard QR
factorization will provide $\bm{Q}$.

The setting where $K$ is \emph{not} globally smooth requires more work and the
specific strategy will depend on the covariance function $K$. Some algorithms
for accelerated operations with $\bS$ offer fast direct solvers: hierarchical
matrices \cite{hackbusch2015}, for example, have been applied in many Gaussian
process settings very successfully
\cite{ambikasaran2015,litvinenko2019,geoga2020}. Closely related
skeletonization-based solvers \cite{minden2017,baugh2018} have shown similar
results. For a much more thoughtful and complete survey of fast direct solvers,
we refer the reader to \cite{martinsson2025}. Much more frequently, however,
fast algorithms only accelerate the matrix-vector product $\bv \mapsto \bS \bv$,
requiring the use of an iterative solver such as the conjugate gradient method.
Thankfully, sparse approximations to $\bS^{-1}$ based on, e.g., the Vecchia
approximation \cite{vecchia1988} provide very effective preconditioners
\cite{guinness2019} (see \cite{geoga_vecchia_jl} for a performant library and
example codes), and so if the number of columns $r$ of $\bm{U}_c$ is not
prohibitively high, one may obtain $\bS^{-1} \bm{U}_c$ in this setting in a
tolerable number of matrix-vector products with $\bS$. Algorithms in this
category that provide extremely fast matrix-vector products include the FMM
\cite{greengard1987}, fast Gauss transform \cite{greengard1991}, Ewald summation
\cite{toukmaji1996}, the equispaced Fourier GP algorithm of
\cite{greengard2022}, and countless others.

In the setting where fully factorizing $\tilde{\bm{C}}$ is \emph{not} feasible,
one must directly extract the $r$ dominant singular vectors of
$\tilde{\bm{\Lambda}}$. Randomized sketching-based algorithms that again boil
down to extracting information from $\tilde{\bm{\Lambda}} \bm{\Omega}$ for a
suitable sketching matrix $\bm{\Omega} \in \R^{n \times (r+p)}$ become the
clear best option. The choice of how to compute $\bv \mapsto \bS^{-1} \bv$
can be made again depending on properties of $K$ near the origin and practical
rank considerations. But the process is effectively the same: after designing
some procedure for rapidly evaluating $\bv \mapsto \bS^{-1} \tilde{\bm{C}} \bv$,
a sketching-based approach can be used to obtain the partial singular vector
basis $\bm{Q}$ at the cost of application to a mildly oversampled sketching
matrix $\bm{\Omega} \in \R^{n \times (r+p)}$.

\subsection{Rapid online prediction} \label{sec:online}

The constructions above yield predictive distributions conditioned on the
compressed summary $\mathbf{s}=\mathbf{Q}^\top \mathbf{y}\in\mathbb{R}^r$. To
clarify what must be computed online for a new query location, consider a single
prediction $f_0=f(\bx_0)$ and the pair $(\mathbf{s},f_0)$.  Under the GP model,
we have the joint Gaussian distribution
\begin{equation*}
\begin{bmatrix}
\mathbf{s} \\
f_0
\end{bmatrix}
=
\begin{bmatrix}
\mathbf{Q}^\top \mathbf{y} \\
f(\bx_0)
\end{bmatrix}
\sim
\mathcal{N}\!\left(
\mathbf{0},\;
\begin{bmatrix}
\mathbf{Q}^\top \mathbf{\Sigma} \mathbf{Q} & \mathbf{Q}^\top \bm{c}_0 \\
\bm{c}_0^\top \mathbf{Q} & K(\bx_0,\bx_0)
\end{bmatrix}
\right),
\qquad
\bm{c}_0 = K(\mathcal{S},\bx_0)\in\mathbb{R}^n.
\end{equation*}
Consequently, the $\mathbf{Q}$-conditional predictive distribution at $\bx_0$ is
\begin{equation} \label{eq:s_cond}
f_0 \mid \mathbf{s}
\sim
\mathcal{N}\!\Bigl(
\bm{c}_0^\top \mathbf{Q}\,(\mathbf{Q}^\top \mathbf{\Sigma} \mathbf{Q})^{-1}\mathbf{s},\;
K(\bx_0,\bx_0)-\bm{c}_0^\top \mathbf{Q}\,(\mathbf{Q}^\top \mathbf{\Sigma} \mathbf{Q})^{-1}\mathbf{Q}^\top \bm{c}_0
\Bigr).
\end{equation}
Since $\bm{Q}^T \bS \bm{Q}$ can obviously be precomputed, the online cost of
evaluating the predictive mean and variance at a new query location $x_0$ is
dominated by computing the cross-term $\mathbf{Q}^\top
\bm{c}_0\in\mathbb{R}^r$. If this cross-term can be obtained in
$\mathcal{O}(1)$ work per query location, then the entire evaluation of the
$\mathbf{Q}$-conditional distribution is $\mathcal{O}(1)$ per query.
Similarly to above, the solution we propose to this problem---which depends
crucially on the kernel being smooth away from the origin---is kernel
anterpolation. Define $\psi_k(\bx_0)$ to be the $k$-th entry of $\bm{Q}^T
\bm{c}_0$, so that
\begin{equation} \label{eq:cross_psi}
  \bm{Q}^T \bm{c}_0 = \mat{
    \sum_{j=1}^n Q_{j,1} K(\bx_j - \bx_0)
    \\
    \vdots
    \\
    \sum_{j=1}^n Q_{j,r} K(\bx_j - \bx_0)
  }
  =
  \mat{
  \psi_1(\bx_0)
  \\
  \vdots
  \\
  \psi_r(\bx_0)
  }.
\end{equation}
Since $K$ is smooth away from the origin and $\bx_0$ is presumed well-separated
from $\bx_j \in \mathcal{D}_{\rm{observe}}$, we note that the exact same
expansion of
\begin{equation*} 
  \psi_k(\bx) \approx \sum_{\bm{j} \in [p']^d} \beta^{(k)}_{\bm{j}} T_{\bm{j}}(\bx)
\end{equation*}
can be computed for each index $k = 1, ..., r$. We may therefore precompute
\emph{proxy points} and \emph{proxy charges} (in the terminology of
\cite{jiang2025}), so that
\begin{equation} \label{eq:proxy}
\psi_k(\bx)
= \sum_{i=1}^n Q_{i,k}\,K(\bx_i-\bx)
\;\approx\;
\sum_{j=1}^R \tilde{\beta}^{(k)}_j\,K(\tilde{\bx}_j-\bx),
\qquad
\tilde{\bm{\beta}}^{(k)} = \bm U^\top \bm Q_{:,k},
\end{equation}
where $\{\tilde{\bx}_j\}_{j=1}^R$ are the Chebyshev proxy points and $\bm
U^\top$ is the anterpolation matrix.  And since each of these proxy point and
proxy charge computations can be done offline in $\bO(n)$ work, the total
precomputation cost is $\bO(T r^2)$, where again $T$ is the dominant cost
between a solve with $\bS^{-1}$ and a matrix-vector product with $\bm{C}$, and
the \emph{online} work to obtain (\ref{eq:s_cond}) is $\bO(1)$ using the
anterpolation compression method for the cross-covariances (\ref{eq:cross_psi}).
The entire procedure for all precomputation and online evaluation is given in
Algorithms $3.1$ and $3.2$ respectively.

\begin{algorithm}[ht!] \label{alg:predict_precompute}
\caption{Precomputation of $\bm{Q}$ and proxy expansions for fast prediction}
 \begin{enumerate}
 \item Prepare, using one of the approaches referred to in Section
 \ref{sec:fastQ}, a method for evaluating $\bv \mapsto \bS^{-1} \tilde{\bm{C}}
 \bv$.
 \item Obtain $\tilde{\bm{\Lambda}} = \bS^{-1} \tilde{\bm{C}} \approx \bm{Q}
 \bm{R}$ to the desired accuracy or column count, either by pre-factorizing
 $\tilde{\bm{C}} \approx \bm{U}_c \bm{V}_c^T$ and re-factorizing $\bS^{-1}
 \bm{U}_c$ and $\bm{V}_c$ or directly using a sketching or pivoting
 factorization on $\bS^{-1} \tilde{\bm{C}}$.
 \item Precompute $\bs = \bm{Q}^T \by$ and $\V \bm{s} = \bm{Q}^T \bS \bm{Q}$.
 \item Precompute $\tilde{\bs} = (\bm{Q}^T \bS \bm{Q})^{-1} \bs$.
 \item Precompute the proxy points and charges to compute $\psi_k(\bx)$ of
 $\bm{Q}^T K(S, \bx)$ as in (\ref{eq:proxy}).
 \end{enumerate}
\end{algorithm}

\begin{algorithm}[ht!] \label{alg:predict_online}
\caption{Online prediction at a single location $\bx_0$}
 \begin{enumerate}
 \item Evaluate $\psi_k(\bx_0)$ for $k=1, ..., r$ using (\ref{eq:proxy}).
 \item Using those evaluations, compute the conditional mean with $\sum_{k=1}^r
 \psi_k(\bx_0)\,\tilde{s}_k$.
 \item Again using those evaluations, form cross-covariances $\bm{Q}^T \bm{c}_0$ with
 (\ref{eq:cross_psi}).
 \item Compute the conditional variance using the precomputed quadratic forms,
 and form the $(r+1)$-dimensional joint distribution with (\ref{eq:s_cond}).
 \end{enumerate}
\end{algorithm}

\section{Numerical demonstrations} \label{sec:demo}

In this section, we will now provide a variety of demonstrations of the
properties of this reduced conditioning approach. After demonstrating that the
reduced basis $\bm{Q}$ can be obtained rapidly for even very large problems, we
will give two very different examples of this approach outperforming nearest
neighbor-based prediction. Additionally, we will introduce a method for
approximating large covariance matrices that uses this approach to directly
build an inverse Cholesky factor, similarly to how Vecchia approximations are
used in contemporary practice. Finally, we provide a practical demonstration of
the number of predictions required to sufficiently amortize the cost of
assembling $\bm{Q}$ in comparison with $k$ nearest neighbor-based prediction
\footnote{All code is available at
\texttt{https://github.com/cgeoga/LinearComboConditioning.jl}.}.

\subsection{Runtime cost} \label{sec:runtime}

We begin by verifying the runtime cost of computing the matrix $\bm{Q}$
described above where $K$ is the squared exponential kernel, given
by
\begin{equation} \label{eq:gausskernel}
  K(\bx - \bx') = \exp \left(
    - \rho^{-2} \norm[2]{\bx - \bx'}^2
  \right).
\end{equation}
In particular, we test the runtime cost of a pragmatic procedure that we expect
to be particularly common in real applications: a missing data region with a
simple geometry, but not to the degree that one can precisely compute a family
of orthogonal functions like Chebyshev polynomials to expand on. In particular,
we simulate $n = 2^k$ independent uniform points on $[0,1]^2$ and remove
locations $\bx$ such that $\norm{\bx - [1/2, 1/2]} < \delta$ for some radius
$\delta$.  While it would be possible to pick $\set{\tilde{\bx}_k}_{k=1}^m$ to
be roots from the Zernike polynomials, we instead demonstrate the effectively
runtime cost where the points $\set{\tilde{\bx}_k}_{k=1}^m$ are simply chosen to
be a very dense evenly spaced grid on $B_r([1/2, 1/2])$ and the adaptive
cross-approximation (ACA) \cite{bebendorf2000}, a greedy pivoting algorithm
often used in numerical PDE methods, is used for partial rank-revealing
factorizations for $\tilde{\bm{C}}$.
\begin{figure}
  \centering
\begingroup
  \makeatletter
  \providecommand\color[2][]{%
    \GenericError{(gnuplot) \space\space\space\@spaces}{%
      Package color not loaded in conjunction with
      terminal option `colourtext'%
    }{See the gnuplot documentation for explanation.%
    }{Either use 'blacktext' in gnuplot or load the package
      color.sty in LaTeX.}%
    \renewcommand\color[2][]{}%
  }%
  \providecommand\includegraphics[2][]{%
    \GenericError{(gnuplot) \space\space\space\@spaces}{%
      Package graphicx or graphics not loaded%
    }{See the gnuplot documentation for explanation.%
    }{The gnuplot epslatex terminal needs graphicx.sty or graphics.sty.}%
    \renewcommand\includegraphics[2][]{}%
  }%
  \providecommand\rotatebox[2]{#2}%
  \@ifundefined{ifGPcolor}{%
    \newif\ifGPcolor
    \GPcolortrue
  }{}%
  \@ifundefined{ifGPblacktext}{%
    \newif\ifGPblacktext
    \GPblacktexttrue
  }{}%
  \let\gplgaddtomacro\g@addto@macro
  \gdef\gplbacktext{}%
  \gdef\gplfronttext{}%
  \makeatother
  \ifGPblacktext
    \def\colorrgb#1{}%
    \def\colorgray#1{}%
  \else
    \ifGPcolor
      \def\colorrgb#1{\color[rgb]{#1}}%
      \def\colorgray#1{\color[gray]{#1}}%
      \expandafter\def\csname LTw\endcsname{\color{white}}%
      \expandafter\def\csname LTb\endcsname{\color{black}}%
      \expandafter\def\csname LTa\endcsname{\color{black}}%
      \expandafter\def\csname LT0\endcsname{\color[rgb]{1,0,0}}%
      \expandafter\def\csname LT1\endcsname{\color[rgb]{0,1,0}}%
      \expandafter\def\csname LT2\endcsname{\color[rgb]{0,0,1}}%
      \expandafter\def\csname LT3\endcsname{\color[rgb]{1,0,1}}%
      \expandafter\def\csname LT4\endcsname{\color[rgb]{0,1,1}}%
      \expandafter\def\csname LT5\endcsname{\color[rgb]{1,1,0}}%
      \expandafter\def\csname LT6\endcsname{\color[rgb]{0,0,0}}%
      \expandafter\def\csname LT7\endcsname{\color[rgb]{1,0.3,0}}%
      \expandafter\def\csname LT8\endcsname{\color[rgb]{0.5,0.5,0.5}}%
    \else
      \def\colorrgb#1{\color{black}}%
      \def\colorgray#1{\color[gray]{#1}}%
      \expandafter\def\csname LTw\endcsname{\color{white}}%
      \expandafter\def\csname LTb\endcsname{\color{black}}%
      \expandafter\def\csname LTa\endcsname{\color{black}}%
      \expandafter\def\csname LT0\endcsname{\color{black}}%
      \expandafter\def\csname LT1\endcsname{\color{black}}%
      \expandafter\def\csname LT2\endcsname{\color{black}}%
      \expandafter\def\csname LT3\endcsname{\color{black}}%
      \expandafter\def\csname LT4\endcsname{\color{black}}%
      \expandafter\def\csname LT5\endcsname{\color{black}}%
      \expandafter\def\csname LT6\endcsname{\color{black}}%
      \expandafter\def\csname LT7\endcsname{\color{black}}%
      \expandafter\def\csname LT8\endcsname{\color{black}}%
    \fi
  \fi
    \setlength{\unitlength}{0.0500bp}%
    \ifx\gptboxheight\undefined%
      \newlength{\gptboxheight}%
      \newlength{\gptboxwidth}%
      \newsavebox{\gptboxtext}%
    \fi%
    \setlength{\fboxrule}{0.5pt}%
    \setlength{\fboxsep}{1pt}%
    \definecolor{tbcol}{rgb}{1,1,1}%
\begin{picture}(9060.00,2820.00)%
    \gplgaddtomacro\gplbacktext{%
      \csname LTb\endcsname
      \put(1707,700){\makebox(0,0)[r]{\strut{}\footnotesize $10^{-1}$}}%
      \csname LTb\endcsname
      \put(1707,1273){\makebox(0,0)[r]{\strut{}\footnotesize $10^{0}$}}%
      \csname LTb\endcsname
      \put(1707,1846){\makebox(0,0)[r]{\strut{}\footnotesize $10^{1}$}}%
      \csname LTb\endcsname
      \put(1707,2419){\makebox(0,0)[r]{\strut{}\footnotesize $10^{2}$}}%
      \csname LTb\endcsname
      \put(1945,460){\makebox(0,0){\strut{}\footnotesize $2^{10}$}}%
      \csname LTb\endcsname
      \put(2629,460){\makebox(0,0){\strut{}\footnotesize $2^{15}$}}%
      \csname LTb\endcsname
      \put(3314,460){\makebox(0,0){\strut{}\footnotesize $2^{20}$}}%
    }%
    \gplgaddtomacro\gplfronttext{%
      \csname LTb\endcsname
      \put(1058,1609){\rotatebox{-270.00}{\makebox(0,0){\strut{}\small runtime (s)}}}%
      \csname LTb\endcsname
      \put(2561,100){\makebox(0,0){\strut{}\small data size $n$}}%
      \csname LTb\endcsname
      \put(2561,2639){\makebox(0,0){\strut{}\small $\delta=0.1$}}%
    }%
    \gplgaddtomacro\gplbacktext{%
      \csname LTb\endcsname
      \put(3903,460){\makebox(0,0){\strut{}\footnotesize $2^{10}$}}%
      \csname LTb\endcsname
      \put(4588,460){\makebox(0,0){\strut{}\footnotesize $2^{15}$}}%
      \csname LTb\endcsname
      \put(5273,460){\makebox(0,0){\strut{}\footnotesize $2^{20}$}}%
    }%
    \gplgaddtomacro\gplfronttext{%
      \csname LTb\endcsname
      \put(4519,100){\makebox(0,0){\strut{}\small data size $n$}}%
      \csname LTb\endcsname
      \put(4519,2639){\makebox(0,0){\strut{}\small $\delta=0.2$}}%
    }%
    \gplgaddtomacro\gplbacktext{%
      \csname LTb\endcsname
      \put(5862,460){\makebox(0,0){\strut{}\footnotesize $2^{10}$}}%
      \csname LTb\endcsname
      \put(6547,460){\makebox(0,0){\strut{}\footnotesize $2^{15}$}}%
      \csname LTb\endcsname
      \put(7231,460){\makebox(0,0){\strut{}\footnotesize $2^{20}$}}%
    }%
    \gplgaddtomacro\gplfronttext{%
      \csname LTb\endcsname
      \put(8339,2400){\makebox(0,0)[r]{\strut{}\tiny $\rho=0.1$}}%
      \csname LTb\endcsname
      \put(8339,2160){\makebox(0,0)[r]{\strut{}\tiny $\rho=1$}}%
      \csname LTb\endcsname
      \put(8339,1920){\makebox(0,0)[r]{\strut{}\tiny $\rho=10$}}%
      \csname LTb\endcsname
      \put(8339,1680){\makebox(0,0)[r]{\strut{}\tiny $\bO(n)$}}%
      \csname LTb\endcsname
      \put(6478,100){\makebox(0,0){\strut{}\small data size $n$}}%
      \csname LTb\endcsname
      \put(6478,2639){\makebox(0,0){\strut{}\small $\delta=0.3$}}%
    }%
    \gplbacktext
    \put(0,0){\includegraphics[width={453.00bp},height={141.00bp}]{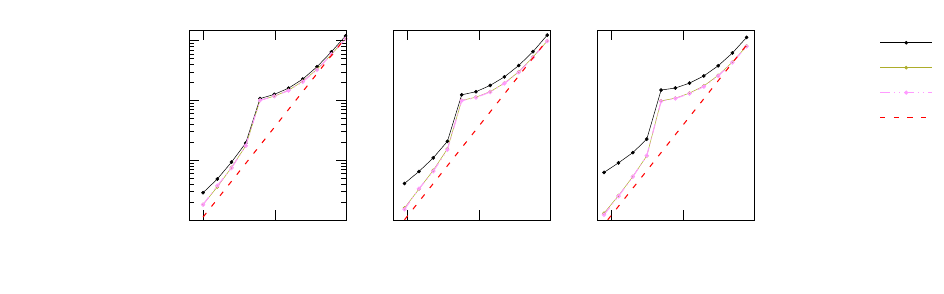}}%
    \gplfronttext
  \end{picture}%
\endgroup

  \caption{The runtime cost (in seconds) of assembling a full basis $\bm{Q}$ for
  the non-degenerate singular vector space of $\tilde{\bm{\Lambda}} = \bS^{-1}
  \tilde{\bm{C}}$ for a growing number of points chosen uniformly on $[0,1]^2
  \setminus B_{\delta}([1/2,1/2])$ using the Gaussian covariance function
  (\ref{eq:gausskernel}) with various values of the range parameter $\rho$
  (indicated by color) and $\tau^2$ fixed at $0.1$.}
  \label{fig:runtime_gauss}
\end{figure}

Figure \ref{fig:runtime_gauss} shows the runtime cost of assembling $\bm{Q}$ in
the setting where (i) one can directly assemble $\bm{G}$ such that $\bS \approx
\bm{G} \bm{G}^T + \tau^2 \I$ and (ii) one can directly assemble $\tilde{\bm{C}}
\approx \bm{U}_c \bm{V}_c^T$. Several patterns are worth commenting on. First,
we note that the similar runtime cost between different radius values $\delta$
is in part due to the fact that a higher fraction of the $n$ points are being
withheld as missing as $\delta$ increases (the top data size for $\delta=0.3$,
for example, was $n=720\,652$, compared to $n\approx 1.01\times
10^6$ for $\delta=0.1$). Second, we note that the range parameter $\rho$
decreasing leads to weaker dependence across the domain, and the impact that
this has on the rank structure of $\bS$ is subtle.  On the one hand, less
dependence means more local features, which raises the rank of $\bS$ and
$\tilde{\bm{C}}$. On the other hand, however, less dependence also means that
far-away points are less informative.  This duality is reminiscent of questions
pertaining to relative versus absolute error or efficiency, and different
stopping criteria in sketching-based partial factorizations may make a shrinking
$\rho$ more or less costly relative to a larger one.

As a second observation, we remind the reader that the runtime cost scaling here
is dependent on the asymptotic sampling regimes. As measurements are made more
and more densely in $[0,1]^2$, for example, the number of columns in $\bm{Q}$
grows sub-linearly in $n$ and eventually stops growing at all, as the function
$\bx_0 \mapsto \bm{\lambda}(\bx_0)$ is being approximated on a fixed region by
points that are also growing more dense in a fixed region (a fact that has deep
implications and connections to parameter identifiability theory in the spatial
statistics literature, see \cite{stein1999} and references within for more
information).  If the domain is growing, an alternative method for rapid solves
with $\bS^{-1}$ would be necessary (potentially using the \emph{fast Gauss
transform} \cite{greengard1991} and a preconditioned conjugate gradient approach
for linear solves). For more complex regions and asymptotic regimes or
covariance functions, the runtime cost of this procedure may be more subtle. 

\subsection{Rosenbrock function interpolation with noisy data}
\label{sec:app_rosenbrock}

Next, we consider an application setting where one has observed data
\begin{equation} \label{eq:data}
  y(\bx) = f(\bx + (1,1)) + \eps_j,
\end{equation}
where $\eps_j \sim \Nd(0, \tau^2)$ and $f(\bx) = \sqrt{(a - x_1)^2 + b(x_2 -
x_1^2)^2}$ is the square root of the \emph{Rosenbrock} function chosen with
standard parameters $(a,b) = (1,100)$, a standard test function in the
optimization and function approximation literature, that has been shifted so
that its minimizer is at the origin. In particular, we consider the problem in
which data have been measured on the domain $[-2,2]^2$ except for a missing
region of radius $\delta = 0.75$ centered at the origin. The goal is to rapidly
obtain accurate interpolants for this function in the missing region, with
potential downstream applications such as Bayesian optimization in which
practitioners actually attempt to optimize over such missing regions. As before,
a Gaussian covariance function is used, although this time with a geometric
anisotropy estimated from data, so that the fitted covariance function is
\begin{equation*} 
  K(\bx - \bx')
  =
  \sigma^2
  \exp \left(
    - (\bx - \bx')^T \bm{M} (\bx - \bx')
  \right)
\end{equation*}
where we note the model-implied marginal variance is $\sigma^2 \approx 2325$ and
the anisotropy is given by
\begin{equation*} 
  \bm{M} = 
  \mat{
    \cos(\theta) & -\sin(\theta) \\
    \sin(\theta) & \cos(\theta)
  }
  \mat{
  0.021 & 0 \\
  0 & 0.26
  }
  \mat{
    \cos(\theta) & -\sin(\theta) \\
    \sin(\theta) & \cos(\theta)
  }^T,
\end{equation*}
with $\theta \approx 1.374$. The data used for this prediction problem is shown
in Figure \ref{fig:rosenbrock_data}.
\begin{figure}[!ht]
  \centering
\begingroup
  \makeatletter
  \providecommand\color[2][]{%
    \GenericError{(gnuplot) \space\space\space\@spaces}{%
      Package color not loaded in conjunction with
      terminal option `colourtext'%
    }{See the gnuplot documentation for explanation.%
    }{Either use 'blacktext' in gnuplot or load the package
      color.sty in LaTeX.}%
    \renewcommand\color[2][]{}%
  }%
  \providecommand\includegraphics[2][]{%
    \GenericError{(gnuplot) \space\space\space\@spaces}{%
      Package graphicx or graphics not loaded%
    }{See the gnuplot documentation for explanation.%
    }{The gnuplot epslatex terminal needs graphicx.sty or graphics.sty.}%
    \renewcommand\includegraphics[2][]{}%
  }%
  \providecommand\rotatebox[2]{#2}%
  \@ifundefined{ifGPcolor}{%
    \newif\ifGPcolor
    \GPcolortrue
  }{}%
  \@ifundefined{ifGPblacktext}{%
    \newif\ifGPblacktext
    \GPblacktexttrue
  }{}%
  \let\gplgaddtomacro\g@addto@macro
  \gdef\gplbacktext{}%
  \gdef\gplfronttext{}%
  \makeatother
  \ifGPblacktext
    \def\colorrgb#1{}%
    \def\colorgray#1{}%
  \else
    \ifGPcolor
      \def\colorrgb#1{\color[rgb]{#1}}%
      \def\colorgray#1{\color[gray]{#1}}%
      \expandafter\def\csname LTw\endcsname{\color{white}}%
      \expandafter\def\csname LTb\endcsname{\color{black}}%
      \expandafter\def\csname LTa\endcsname{\color{black}}%
      \expandafter\def\csname LT0\endcsname{\color[rgb]{1,0,0}}%
      \expandafter\def\csname LT1\endcsname{\color[rgb]{0,1,0}}%
      \expandafter\def\csname LT2\endcsname{\color[rgb]{0,0,1}}%
      \expandafter\def\csname LT3\endcsname{\color[rgb]{1,0,1}}%
      \expandafter\def\csname LT4\endcsname{\color[rgb]{0,1,1}}%
      \expandafter\def\csname LT5\endcsname{\color[rgb]{1,1,0}}%
      \expandafter\def\csname LT6\endcsname{\color[rgb]{0,0,0}}%
      \expandafter\def\csname LT7\endcsname{\color[rgb]{1,0.3,0}}%
      \expandafter\def\csname LT8\endcsname{\color[rgb]{0.5,0.5,0.5}}%
    \else
      \def\colorrgb#1{\color{black}}%
      \def\colorgray#1{\color[gray]{#1}}%
      \expandafter\def\csname LTw\endcsname{\color{white}}%
      \expandafter\def\csname LTb\endcsname{\color{black}}%
      \expandafter\def\csname LTa\endcsname{\color{black}}%
      \expandafter\def\csname LT0\endcsname{\color{black}}%
      \expandafter\def\csname LT1\endcsname{\color{black}}%
      \expandafter\def\csname LT2\endcsname{\color{black}}%
      \expandafter\def\csname LT3\endcsname{\color{black}}%
      \expandafter\def\csname LT4\endcsname{\color{black}}%
      \expandafter\def\csname LT5\endcsname{\color{black}}%
      \expandafter\def\csname LT6\endcsname{\color{black}}%
      \expandafter\def\csname LT7\endcsname{\color{black}}%
      \expandafter\def\csname LT8\endcsname{\color{black}}%
    \fi
  \fi
    \setlength{\unitlength}{0.0500bp}%
    \ifx\gptboxheight\undefined%
      \newlength{\gptboxheight}%
      \newlength{\gptboxwidth}%
      \newsavebox{\gptboxtext}%
    \fi%
    \setlength{\fboxrule}{0.5pt}%
    \setlength{\fboxsep}{1pt}%
    \definecolor{tbcol}{rgb}{1,1,1}%
\begin{picture}(3400.00,2820.00)%
    \gplgaddtomacro\gplbacktext{%
      \csname LTb\endcsname
      \put(68,560){\makebox(0,0)[r]{\strut{}\footnotesize -1.5}}%
      \csname LTb\endcsname
      \put(68,1120){\makebox(0,0)[r]{\strut{}\footnotesize -0.5}}%
      \csname LTb\endcsname
      \put(68,1679){\makebox(0,0)[r]{\strut{}\footnotesize 0.5}}%
      \csname LTb\endcsname
      \put(68,2239){\makebox(0,0)[r]{\strut{}\footnotesize 1.5}}%
      \csname LTb\endcsname
      \put(528,40){\makebox(0,0){\strut{}\footnotesize -1.5}}%
      \csname LTb\endcsname
      \put(1246,40){\makebox(0,0){\strut{}\footnotesize -0.5}}%
      \csname LTb\endcsname
      \put(1964,40){\makebox(0,0){\strut{}\footnotesize 0.5}}%
      \csname LTb\endcsname
      \put(2682,40){\makebox(0,0){\strut{}\footnotesize 1.5}}%
    }%
    \gplgaddtomacro\gplfronttext{%
      \csname LTb\endcsname
      \put(3358,483){\makebox(0,0)[l]{\strut{}\tiny 0}}%
      \csname LTb\endcsname
      \put(3358,890){\makebox(0,0)[l]{\strut{}\tiny 20}}%
      \csname LTb\endcsname
      \put(3358,1298){\makebox(0,0)[l]{\strut{}\tiny 40}}%
      \csname LTb\endcsname
      \put(3358,1705){\makebox(0,0)[l]{\strut{}\tiny 60}}%
      \csname LTb\endcsname
      \put(3358,2112){\makebox(0,0)[l]{\strut{}\tiny 80}}%
      \csname LTb\endcsname
      \put(3358,2519){\makebox(0,0)[l]{\strut{}\tiny 100}}%
    }%
    \gplbacktext
    \put(0,0){\includegraphics[width={170.00bp},height={141.00bp}]{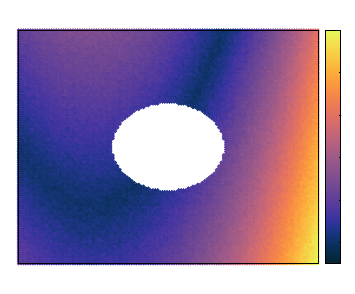}}%
    \gplfronttext
  \end{picture}%
\endgroup

  \caption{The dataset used in the prediction problem analyzed in Figure
  \ref{fig:rosenbrock_highnoise} below. The data size is $n \approx 15\,000$,
  and measurements are given as in (\ref{eq:data}).}
  \label{fig:rosenbrock_data}
\end{figure}
As discussed above, measurement noise significantly weakens the locality of
Kriging prediction weights, and for an analytic process like the underlying
$f(\bx)$ here the impact of noise on prediction results can be significant even
when $f(\bx)$ itself has short dependence lengthscales.
\begin{figure}[!ht]
  \centering
\begingroup
  \makeatletter
  \providecommand\color[2][]{%
    \GenericError{(gnuplot) \space\space\space\@spaces}{%
      Package color not loaded in conjunction with
      terminal option `colourtext'%
    }{See the gnuplot documentation for explanation.%
    }{Either use 'blacktext' in gnuplot or load the package
      color.sty in LaTeX.}%
    \renewcommand\color[2][]{}%
  }%
  \providecommand\includegraphics[2][]{%
    \GenericError{(gnuplot) \space\space\space\@spaces}{%
      Package graphicx or graphics not loaded%
    }{See the gnuplot documentation for explanation.%
    }{The gnuplot epslatex terminal needs graphicx.sty or graphics.sty.}%
    \renewcommand\includegraphics[2][]{}%
  }%
  \providecommand\rotatebox[2]{#2}%
  \@ifundefined{ifGPcolor}{%
    \newif\ifGPcolor
    \GPcolortrue
  }{}%
  \@ifundefined{ifGPblacktext}{%
    \newif\ifGPblacktext
    \GPblacktexttrue
  }{}%
  \let\gplgaddtomacro\g@addto@macro
  \gdef\gplbacktext{}%
  \gdef\gplfronttext{}%
  \makeatother
  \ifGPblacktext
    \def\colorrgb#1{}%
    \def\colorgray#1{}%
  \else
    \ifGPcolor
      \def\colorrgb#1{\color[rgb]{#1}}%
      \def\colorgray#1{\color[gray]{#1}}%
      \expandafter\def\csname LTw\endcsname{\color{white}}%
      \expandafter\def\csname LTb\endcsname{\color{black}}%
      \expandafter\def\csname LTa\endcsname{\color{black}}%
      \expandafter\def\csname LT0\endcsname{\color[rgb]{1,0,0}}%
      \expandafter\def\csname LT1\endcsname{\color[rgb]{0,1,0}}%
      \expandafter\def\csname LT2\endcsname{\color[rgb]{0,0,1}}%
      \expandafter\def\csname LT3\endcsname{\color[rgb]{1,0,1}}%
      \expandafter\def\csname LT4\endcsname{\color[rgb]{0,1,1}}%
      \expandafter\def\csname LT5\endcsname{\color[rgb]{1,1,0}}%
      \expandafter\def\csname LT6\endcsname{\color[rgb]{0,0,0}}%
      \expandafter\def\csname LT7\endcsname{\color[rgb]{1,0.3,0}}%
      \expandafter\def\csname LT8\endcsname{\color[rgb]{0.5,0.5,0.5}}%
    \else
      \def\colorrgb#1{\color{black}}%
      \def\colorgray#1{\color[gray]{#1}}%
      \expandafter\def\csname LTw\endcsname{\color{white}}%
      \expandafter\def\csname LTb\endcsname{\color{black}}%
      \expandafter\def\csname LTa\endcsname{\color{black}}%
      \expandafter\def\csname LT0\endcsname{\color{black}}%
      \expandafter\def\csname LT1\endcsname{\color{black}}%
      \expandafter\def\csname LT2\endcsname{\color{black}}%
      \expandafter\def\csname LT3\endcsname{\color{black}}%
      \expandafter\def\csname LT4\endcsname{\color{black}}%
      \expandafter\def\csname LT5\endcsname{\color{black}}%
      \expandafter\def\csname LT6\endcsname{\color{black}}%
      \expandafter\def\csname LT7\endcsname{\color{black}}%
      \expandafter\def\csname LT8\endcsname{\color{black}}%
    \fi
  \fi
    \setlength{\unitlength}{0.0500bp}%
    \ifx\gptboxheight\undefined%
      \newlength{\gptboxheight}%
      \newlength{\gptboxwidth}%
      \newsavebox{\gptboxtext}%
    \fi%
    \setlength{\fboxrule}{0.5pt}%
    \setlength{\fboxsep}{1pt}%
    \definecolor{tbcol}{rgb}{1,1,1}%
\begin{picture}(7920.00,3400.00)%
    \gplgaddtomacro\gplbacktext{%
    }%
    \gplgaddtomacro\gplfronttext{%
      \csname LTb\endcsname
      \put(199,2344){\rotatebox{-270.00}{\makebox(0,0){\strut{}\footnotesize Cond. $\mu$}}}%
      \csname LTb\endcsname
      \put(1758,1647){\makebox(0,0)[l]{\strut{}\tiny -2}}%
      \csname LTb\endcsname
      \put(1758,2344){\makebox(0,0)[l]{\strut{}\tiny 10}}%
      \csname LTb\endcsname
      \put(1758,2925){\makebox(0,0)[l]{\strut{}\tiny 20}}%
      \csname LTb\endcsname
      \put(982,3161){\makebox(0,0){\strut{}\footnotesize KNN: $k=100$}}%
    }%
    \gplgaddtomacro\gplbacktext{%
    }%
    \gplgaddtomacro\gplfronttext{%
      \csname LTb\endcsname
      \put(3605,1885){\makebox(0,0)[l]{\strut{}\tiny 5}}%
      \csname LTb\endcsname
      \put(3605,2225){\makebox(0,0)[l]{\strut{}\tiny 10}}%
      \csname LTb\endcsname
      \put(3605,2565){\makebox(0,0)[l]{\strut{}\tiny 15}}%
      \csname LTb\endcsname
      \put(3605,2905){\makebox(0,0)[l]{\strut{}\tiny 20}}%
      \csname LTb\endcsname
      \put(2829,3161){\makebox(0,0){\strut{}\footnotesize KNN: $k=1000$}}%
    }%
    \gplgaddtomacro\gplbacktext{%
    }%
    \gplgaddtomacro\gplfronttext{%
      \csname LTb\endcsname
      \put(5452,1885){\makebox(0,0)[l]{\strut{}\tiny 5}}%
      \csname LTb\endcsname
      \put(5452,2225){\makebox(0,0)[l]{\strut{}\tiny 10}}%
      \csname LTb\endcsname
      \put(5452,2565){\makebox(0,0)[l]{\strut{}\tiny 15}}%
      \csname LTb\endcsname
      \put(5452,2905){\makebox(0,0)[l]{\strut{}\tiny 20}}%
      \csname LTb\endcsname
      \put(4675,3161){\makebox(0,0){\strut{}\footnotesize $\bm{Q} \in \R^{n \times 30}$}}%
    }%
    \gplgaddtomacro\gplbacktext{%
    }%
    \gplgaddtomacro\gplfronttext{%
      \csname LTb\endcsname
      \put(7298,1885){\makebox(0,0)[l]{\strut{}\tiny 5}}%
      \csname LTb\endcsname
      \put(7298,2225){\makebox(0,0)[l]{\strut{}\tiny 10}}%
      \csname LTb\endcsname
      \put(7298,2565){\makebox(0,0)[l]{\strut{}\tiny 15}}%
      \csname LTb\endcsname
      \put(7298,2905){\makebox(0,0)[l]{\strut{}\tiny 20}}%
      \csname LTb\endcsname
      \put(6522,3161){\makebox(0,0){\strut{}\footnotesize exact}}%
    }%
    \gplgaddtomacro\gplbacktext{%
    }%
    \gplgaddtomacro\gplfronttext{%
      \csname LTb\endcsname
      \put(199,781){\rotatebox{-270.00}{\makebox(0,0){\strut{}\footnotesize Cond. $\sigma$}}}%
      \csname LTb\endcsname
      \put(1758,84){\makebox(0,0)[l]{\strut{}\tiny 0}}%
      \csname LTb\endcsname
      \put(1758,477){\makebox(0,0)[l]{\strut{}\tiny 1}}%
      \csname LTb\endcsname
      \put(1758,869){\makebox(0,0)[l]{\strut{}\tiny 2}}%
      \csname LTb\endcsname
      \put(1758,1262){\makebox(0,0)[l]{\strut{}\tiny 3}}%
    }%
    \gplgaddtomacro\gplbacktext{%
    }%
    \gplgaddtomacro\gplfronttext{%
      \csname LTb\endcsname
      \put(3605,84){\makebox(0,0)[l]{\strut{}\tiny 0}}%
      \csname LTb\endcsname
      \put(3605,781){\makebox(0,0)[l]{\strut{}\tiny 0.25}}%
      \csname LTb\endcsname
      \put(3605,1478){\makebox(0,0)[l]{\strut{}\tiny 0.5}}%
    }%
    \gplgaddtomacro\gplbacktext{%
    }%
    \gplgaddtomacro\gplfronttext{%
      \csname LTb\endcsname
      \put(5452,84){\makebox(0,0)[l]{\strut{}\tiny 0.05}}%
      \csname LTb\endcsname
      \put(5452,642){\makebox(0,0)[l]{\strut{}\tiny 0.06}}%
      \csname LTb\endcsname
      \put(5452,1199){\makebox(0,0)[l]{\strut{}\tiny 0.07}}%
    }%
    \gplgaddtomacro\gplbacktext{%
    }%
    \gplgaddtomacro\gplfronttext{%
      \csname LTb\endcsname
      \put(7298,309){\makebox(0,0)[l]{\strut{}\tiny 0.003}}%
      \csname LTb\endcsname
      \put(7298,759){\makebox(0,0)[l]{\strut{}\tiny 0.004}}%
      \csname LTb\endcsname
      \put(7298,1208){\makebox(0,0)[l]{\strut{}\tiny 0.005}}%
    }%
    \gplbacktext
    \put(0,0){\includegraphics[width={396.00bp},height={170.00bp}]{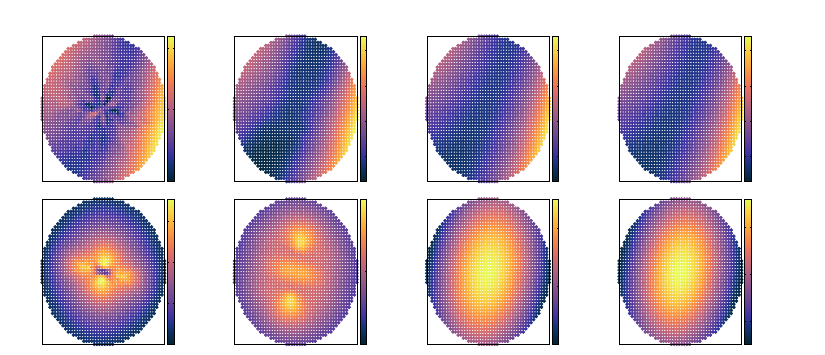}}%
    \gplfronttext
  \end{picture}%
\endgroup

  \caption{A visual summary of the conditional mean and variance for the
  Rosenbrock function prediction problem using $k$-nearest neighbors and our
  dense $\bm{Q}^T \by$ conditioning strategy, demonstrating the instability of
  nearest-neighbor prediction for this problem in the case of a mild noise.}
  \label{fig:rosenbrock_highnoise}
\end{figure}
Figure \ref{fig:rosenbrock_highnoise} shows a striking demonstration of this
phenomenon: with measurement noise $\tau^2 = 2$, representing a reasonably high signal to
noise ratio (SNR), predicting with even $k=1\,000$ nearest neighbors produces
very strong artifacts, particularly in the conditional variance. Predicting with
a fixed $r=30$ columns $\bm{Q}$ computed from the dominant singular vectors of
$\tilde{\bm{\Lambda}} = \bS^{-1} \tilde{\bm{C}}$, on the other hand, gives
predictions and conditional variances that are nearly identical in size and
structure to that of the exact conditional distribution given all the data. We
note that the magnitude of the conditional variance given $\bm{Q}^T \by$ is
materially larger than it is given the full data $\by$. But this is arguably a
demonstration that one does not \emph{need} for $\bm{Q}$ to have as many columns
as $\tilde{\bm{\Lambda}}$ has non-degenerate singular vectors for the posterior
structure of the prediction problem to be very faithfully reproduced. In
Bayesian optimization settings, for example, one frequently attempts to maximize
$\bx \mapsto E[f(\bx) \sv \by] + \sqrt{V[f(\bx) \sv \by]}$. While
replacing $\by$ with the $k$ nearest neighbors would have a very significant
impact on that objective function and where it is minimized (as demonstrated by
Figure \ref{fig:rosenbrock_highnoise}), approximating those conditionals using
$\bm{Q}^T \by$ instead of $\by$ will have effectively no impact, and will still
provide an objective function that can be evaluated in $\bO(1)$ runtime cost
using the anterpolation recompression described in Section \ref{sec:online}.

\subsection{Guaranteed definiteness with off-diagonal block compression} \label{sec:odlr}

A very popular method in the applied math community for working with large
structured matrices is to obtain low rank approximations for off-diagonal
blocks. Concretely, in the case of a GP model, one may consider a collection of
measurements $\by = [f(\bx_j)]_{j=1}^n$, where $f$ is a mean-zero GP whose
kernel is smooth away from the origin, and partition $\by$ into $\by = [\by_1,
\by_2]$ based on, e.g., splitting the domain. Since the kernel for $f$ is smooth
away from the origin, the arguments of Section \ref{sec:low-rank-crosscov} would
imply that the covariance matrix may be approximated as
\begin{equation} \label{eq:odlr}
  \V
  \mat{\by_1 \\ \by_2}  
  =
  \mat{\bS_{11}   & \bS_{12} \\
       \bS_{12}^T & \bS_{22}
  }
  \approx
  \mat{\bS_{11}   & \bm{U} \bm{V}^T \\
       \bm{V} \bm{U}^T & \bS_{22}
  }.
\end{equation}
Most methods for problems in this category use direct algebraic methods to
factorize $\bS_{12} \approx \bm{U} \bm{V}^T$, such as the ACA
\cite{bebendorf2000}. If one computes $\bm{U}$ and $\bm{V}$ so that this
approximation is accurate to nearly machine precision, then the resulting full
matrix approximation (the right-hand side of (\ref{eq:odlr})) will remain
positive definite. A fundamental challenge with this category of approximation,
however, is that the numerical rank of a cross-covariance like $\bS_{12}$ will
change based on point geometry, kernel properties and parameters, dimension, and
so on.  As a result, methods that cap the permitted rank implied by $\bm{U}
\bm{V}^T$, which is often necessary for computational complexity guarantees, can
run the risk of producing less accurate overall approximations to the true
matrix in (\ref{eq:odlr}), and oftentimes even produce indefinite approximations
to definite matrices \cite{geoga2020}. For GP applications in particular that
require a log-determinant, this can be problematic.

The method introduced here offers a natural alternative for producing these
low-rank approximations to off-diagonal blocks by instead compressing them in
the inverse square-root domain. Letting $\bm{C}_{11}$ and $\bm{C}_{22|1}$ denote 
the lower Cholesky factors of $\bS_{11}$ and the Schur complement 
$\bS_{22} - \bS_{12}^T \bS_{11}^{-1} \bS_{12}$ respectively (so that 
$\bS_{11} = \bm{C}_{11} \bm{C}_{11}^T$), we recall that the block Cholesky 
factorization for the precision matrix of the data $[\by_1, \by_2]$ is given by
\begin{equation} \label{eq:ichol}
  \mat{\bS_{11}   & \bS_{12} \\
       \bS_{12}^T & \bS_{22}
  }^{-1}
  =
  \bm{L}^T \bm{L}
  \quad \text{where} \quad
  \bm{L} = 
  \mat{
    \bm{C}_{11}^{-1} & \bm{0} \\
    -\bm{C}_{22|1}^{-1} \bS_{12}^T \bS_{11}^{-1} & \bm{C}_{22|1}^{-1}
  }.
\end{equation}
With that in mind, if we obtain a $\bm{Q} \in \R^{n_1 \times r}$ such that
$E[\by_2 \sv \by_1] \approx E[\by_2 \sv \bm{Q}^T \by_1]$, then we may build a
block-sparse approximation to $\bm{L}$. Defining $\bS_{QQ} = \bm{Q}^T \bS_{11} \bm{Q}$ 
and letting the remaining $\bm{C}$ matrices denote the lower Cholesky factors of their 
respective Schur complements (so that, as an example, $\bm{C}_{Q_\perp | Q}$ is the 
lower Cholesky factor of $\bm{Q}_\perp^T \bS_{11} \bm{Q}_\perp - \bm{Q}_\perp^T
\bS_{11} \bm{Q} \bS_{QQ}^{-1} \bm{Q}^T \bS_{11} \bm{Q}_\perp$), we have
\begin{equation*} \label{eq:ichol_appx}
  \set{
  \V \mat{
    \bm{Q}^T         \by_1 \\
    \bm{Q}_{\perp}^T \by_1 \\
    \by_2
  }
  }^{-1}
  \approx 
  \tilde{\bm{L}}^T \tilde{\bm{L}}
  \quad 
  \text{where} \quad
  \tilde{\bm{L}}
  =
  \mat{
  \bm{C}_{QQ}^{-1} & \bm{0} & \bm{0} \\
  -\bm{C}_{Q_\perp | Q}^{-1} \bm{Q}_\perp^T \bS_{11} \bm{Q} \bS_{QQ}^{-1} & \bm{C}_{Q_\perp | Q}^{-1} & \bm{0} \\
  -\bm{C}_{22|Q}^{-1} \bS_{12}^T \bm{Q} \bS_{QQ}^{-1} & \bm{0} & \bm{C}_{22|Q}^{-1}
  }.
\end{equation*}
By virtue of effectively incorporating this low rank approximation in square
root space and directly forming a valid Cholesky factor, the implied
approximation to the full matrix (which is a permuted form of $(\tilde{\bm{L}}^T
\tilde{\bm{L}})^{-1}$) will naturally be positive definite. And so unlike in the
setting where one directly approximates $\bS_{12}$, this approach makes fixed
rank and non-adaptive matrix compression methods safe. Figure \ref{fig:odlr}
shows a demonstration of KL-accuracy of the two methods in two settings of
observing a Mat\'ern process on $[0,1]^2$. In the first setting, the points
$\set{\bx_j}_{j=1}^{10^4}$ are uniformly distributed and split into $\by_1 \in
\R^{n_1}$ and $\by_2 \in \R^{n_2}$ based on where $x_1 \leq 1/2$. Since
locations for $\by_1$ and $\by_2$ can be arbitrarily close, we refer to this as
weakly separated. Alternatively, we also consider the setting where points for
$\by_1$ are drawn randomly on $[0, 0.475] \times [0,1]$ and points for $\by_2$
are drawn on $[0.525, 1] \times [0,1]$. Since these points are at least a normed
distance of $0.05$ away, we call this strongly separated, as
$\text{dist}(\mathcal{D}, \Dpred) \geq 0.05$ in the terminology of Section
\ref{sec:conditioning}.
\begin{figure}[!ht]
  \centering
\begingroup
  \makeatletter
  \providecommand\color[2][]{%
    \GenericError{(gnuplot) \space\space\space\@spaces}{%
      Package color not loaded in conjunction with
      terminal option `colourtext'%
    }{See the gnuplot documentation for explanation.%
    }{Either use 'blacktext' in gnuplot or load the package
      color.sty in LaTeX.}%
    \renewcommand\color[2][]{}%
  }%
  \providecommand\includegraphics[2][]{%
    \GenericError{(gnuplot) \space\space\space\@spaces}{%
      Package graphicx or graphics not loaded%
    }{See the gnuplot documentation for explanation.%
    }{The gnuplot epslatex terminal needs graphicx.sty or graphics.sty.}%
    \renewcommand\includegraphics[2][]{}%
  }%
  \providecommand\rotatebox[2]{#2}%
  \@ifundefined{ifGPcolor}{%
    \newif\ifGPcolor
    \GPcolortrue
  }{}%
  \@ifundefined{ifGPblacktext}{%
    \newif\ifGPblacktext
    \GPblacktexttrue
  }{}%
  \let\gplgaddtomacro\g@addto@macro
  \gdef\gplbacktext{}%
  \gdef\gplfronttext{}%
  \makeatother
  \ifGPblacktext
    \def\colorrgb#1{}%
    \def\colorgray#1{}%
  \else
    \ifGPcolor
      \def\colorrgb#1{\color[rgb]{#1}}%
      \def\colorgray#1{\color[gray]{#1}}%
      \expandafter\def\csname LTw\endcsname{\color{white}}%
      \expandafter\def\csname LTb\endcsname{\color{black}}%
      \expandafter\def\csname LTa\endcsname{\color{black}}%
      \expandafter\def\csname LT0\endcsname{\color[rgb]{1,0,0}}%
      \expandafter\def\csname LT1\endcsname{\color[rgb]{0,1,0}}%
      \expandafter\def\csname LT2\endcsname{\color[rgb]{0,0,1}}%
      \expandafter\def\csname LT3\endcsname{\color[rgb]{1,0,1}}%
      \expandafter\def\csname LT4\endcsname{\color[rgb]{0,1,1}}%
      \expandafter\def\csname LT5\endcsname{\color[rgb]{1,1,0}}%
      \expandafter\def\csname LT6\endcsname{\color[rgb]{0,0,0}}%
      \expandafter\def\csname LT7\endcsname{\color[rgb]{1,0.3,0}}%
      \expandafter\def\csname LT8\endcsname{\color[rgb]{0.5,0.5,0.5}}%
    \else
      \def\colorrgb#1{\color{black}}%
      \def\colorgray#1{\color[gray]{#1}}%
      \expandafter\def\csname LTw\endcsname{\color{white}}%
      \expandafter\def\csname LTb\endcsname{\color{black}}%
      \expandafter\def\csname LTa\endcsname{\color{black}}%
      \expandafter\def\csname LT0\endcsname{\color{black}}%
      \expandafter\def\csname LT1\endcsname{\color{black}}%
      \expandafter\def\csname LT2\endcsname{\color{black}}%
      \expandafter\def\csname LT3\endcsname{\color{black}}%
      \expandafter\def\csname LT4\endcsname{\color{black}}%
      \expandafter\def\csname LT5\endcsname{\color{black}}%
      \expandafter\def\csname LT6\endcsname{\color{black}}%
      \expandafter\def\csname LT7\endcsname{\color{black}}%
      \expandafter\def\csname LT8\endcsname{\color{black}}%
    \fi
  \fi
    \setlength{\unitlength}{0.0500bp}%
    \ifx\gptboxheight\undefined%
      \newlength{\gptboxheight}%
      \newlength{\gptboxwidth}%
      \newsavebox{\gptboxtext}%
    \fi%
    \setlength{\fboxrule}{0.5pt}%
    \setlength{\fboxsep}{1pt}%
    \definecolor{tbcol}{rgb}{1,1,1}%
\begin{picture}(9060.00,3400.00)%
    \gplgaddtomacro\gplbacktext{%
      \csname LTb\endcsname
      \put(803,837){\makebox(0,0)[r]{\strut{}\scriptsize $10^{-3}$}}%
      \csname LTb\endcsname
      \put(803,1196){\makebox(0,0)[r]{\strut{}\scriptsize $10^{-2}$}}%
      \csname LTb\endcsname
      \put(803,1554){\makebox(0,0)[r]{\strut{}\scriptsize $10^{-1}$}}%
      \csname LTb\endcsname
      \put(803,1913){\makebox(0,0)[r]{\strut{}\scriptsize $10^{0}$}}%
      \csname LTb\endcsname
      \put(803,2272){\makebox(0,0)[r]{\strut{}\scriptsize $10^{1}$}}%
      \csname LTb\endcsname
      \put(803,2631){\makebox(0,0)[r]{\strut{}\scriptsize $10^{2}$}}%
      \csname LTb\endcsname
      \put(1293,436){\makebox(0,0){\strut{}\scriptsize 20}}%
      \csname LTb\endcsname
      \put(1702,436){\makebox(0,0){\strut{}\scriptsize 40}}%
      \csname LTb\endcsname
      \put(2112,436){\makebox(0,0){\strut{}\scriptsize 60}}%
      \csname LTb\endcsname
      \put(2521,436){\makebox(0,0){\strut{}\scriptsize 80}}%
      \csname LTb\endcsname
      \put(2931,436){\makebox(0,0){\strut{}\scriptsize 100}}%
      \csname LTb\endcsname
      \put(3340,436){\makebox(0,0){\strut{}\scriptsize 120}}%
      \csname LTb\endcsname
      \put(3750,436){\makebox(0,0){\strut{}\scriptsize 140}}%
    }%
    \gplgaddtomacro\gplfronttext{%
      \csname LTb\endcsname
      \put(154,1774){\rotatebox{-270.00}{\makebox(0,0){\strut{}\small KL divergence}}}%
      \csname LTb\endcsname
      \put(2429,76){\makebox(0,0){\strut{}\small rank $r$}}%
      \csname LTb\endcsname
      \put(2429,3112){\makebox(0,0){\strut{}\small Weakly separated}}%
    }%
    \gplgaddtomacro\gplbacktext{%
      \csname LTb\endcsname
      \put(4984,684){\makebox(0,0)[r]{\strut{}\scriptsize $10^{-10}$}}%
      \csname LTb\endcsname
      \put(4984,1089){\makebox(0,0)[r]{\strut{}\scriptsize $10^{-8}$}}%
      \csname LTb\endcsname
      \put(4984,1494){\makebox(0,0)[r]{\strut{}\scriptsize $10^{-6}$}}%
      \csname LTb\endcsname
      \put(4984,1899){\makebox(0,0)[r]{\strut{}\scriptsize $10^{-4}$}}%
      \csname LTb\endcsname
      \put(4984,2304){\makebox(0,0)[r]{\strut{}\scriptsize $10^{-2}$}}%
      \csname LTb\endcsname
      \put(4984,2709){\makebox(0,0)[r]{\strut{}\scriptsize $10^{0}$}}%
      \csname LTb\endcsname
      \put(5474,436){\makebox(0,0){\strut{}\scriptsize 20}}%
      \csname LTb\endcsname
      \put(5883,436){\makebox(0,0){\strut{}\scriptsize 40}}%
      \csname LTb\endcsname
      \put(6293,436){\makebox(0,0){\strut{}\scriptsize 60}}%
      \csname LTb\endcsname
      \put(6702,436){\makebox(0,0){\strut{}\scriptsize 80}}%
      \csname LTb\endcsname
      \put(7112,436){\makebox(0,0){\strut{}\scriptsize 100}}%
      \csname LTb\endcsname
      \put(7521,436){\makebox(0,0){\strut{}\scriptsize 120}}%
      \csname LTb\endcsname
      \put(7931,436){\makebox(0,0){\strut{}\scriptsize 140}}%
    }%
    \gplgaddtomacro\gplfronttext{%
      \csname LTb\endcsname
      \put(7687,2657){\makebox(0,0)[r]{\strut{}\scriptsize direct $\bS_{12} \approx \bm{U} \bm{V}^T$}}%
      \csname LTb\endcsname
      \put(7687,2417){\makebox(0,0)[r]{\strut{}\scriptsize (\ref{eq:ichol_appx})-based approx.}}%
      \csname LTb\endcsname
      \put(6610,76){\makebox(0,0){\strut{}\small rank $r$}}%
      \csname LTb\endcsname
      \put(6610,3112){\makebox(0,0){\strut{}\small Strongly separated}}%
    }%
    \gplbacktext
    \put(0,0){\includegraphics[width={453.00bp},height={170.00bp}]{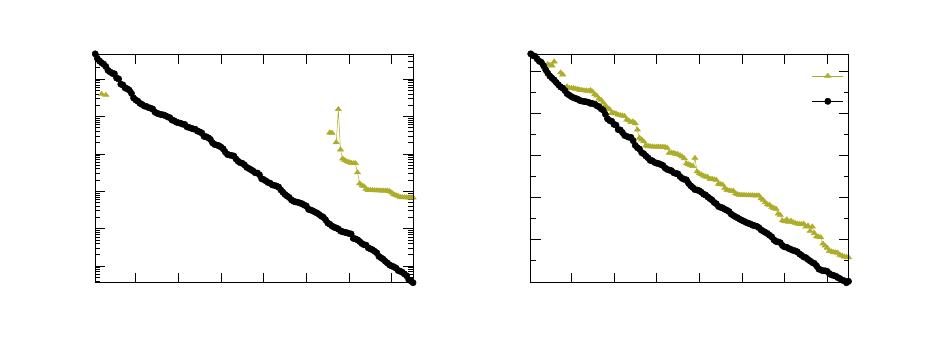}}%
    \gplfronttext
  \end{picture}%
\endgroup

  \caption{A comparison of the KL accuracy of approximating the joint
  distribution of $[\by_1, \by_2]$ using two different methods to produce low
  rank proxies for $\bS_{12}$. The first simply uses the truncated SVD to obtain
  $\bS_{12} \approx \bm{U} \bm{V}^T$, and the second implicitly builds it using
  the block-sparse inverse Cholesky factor via (\ref{eq:ichol_appx}). In the
  weakly separated regime, measurements for $\by_1$ and $\by_2$ may be
  arbitrarily close together, and in the strong regime they are at least $0.05$
  norm units apart.
  }
  \label{fig:odlr}
\end{figure}

As Figure \ref{fig:odlr} demonstrates, particularly in the weakly separated
regime, direct low rank approximation of $\bS_{12}$ does not yield positive
definite joint covariances for ranks less than $\approx 100$. And even then, its
accuracy is materially worse than the approach using this work via
(\ref{eq:ichol_appx}). In the strongly separated case, the two methods perform
more similarly, although for very low ranks the direct approximation still shows
instability. By forming $(\tilde{\bm{L}}^T \tilde{\bm{L}})^{-1}$, this approach
is also producing low rank approximations to $\bS_{12}$. But crucially, it is
not doing so with respect to a traditional error metric like the spectral norm.
Moreover, letting $\tilde{p}(\by_1, \by_2) = p(\by_1) p(\by_2 \sv \bm{Q}^T
\by_1)$ be the approximated joint density, we note that
\begin{equation*} 
 \kl{p(\by_1, \by_2)}{\tilde{p}(\by_1, \by_2)}
 =
 \E_{\by_1 \sim \Nd(\bm{0}, \bS_{11})} \kl{p(\by_2 \sv \by_1)}{p(\by_2 \sv
 \bm{Q}^T \by_1)}.
\end{equation*}
In this setting of rectangular $\mathcal{D}$ and $\Dpred$ and a kernel
that is analytic away from the origin, we see precisely the exponential
convergence implied by the theory of Section \ref{sec:conditioning}.

\subsection{Sea surface temperature anomaly prediction} \label{sec:ssta}

To provide a demonstration with real data in non-idealized settings, we consider
the problem of predicting an irregular gap that was synthetically created to
resemble cloud-based measurement gaps in sea surface temperature anomaly data in
the central Pacific ocean from the NOAA Coral Reef Watch database
\cite{noaa_crw}.  This selection of data has approximately $58\,000$ given
measurements and $9\,000$ missing points that need to be interpolated. The
simulated cloud-based gap does not have a smooth boundary or simple geometry,
and so designing specific orthogonal polynomials to obtain well-behaved
interpolation nodes is not feasible. Instead, we demonstrate another practical
but mathematically sub-optimal compromise of simply using the missing locations
themselves for $\set{\bx_l^g}$. After fitting several mean-zero GP models to the
data, the generalized Cauchy covariance function, given by
\begin{equation*} 
  K(\bx - \bx') = \sigma^2 \left(
    1 + \set{
      \frac{|x_{\text{lat}} - x'_{\text{lat}}|^2 }{\rho_\text{lat}}
      +
      \frac{|x_{\text{lon}} - x'_{\text{lon}}|^2 }{\rho_\text{lon}}
    }^{\alpha/2}
  \right)^{-\beta/\alpha},
\end{equation*}
provided the best fit and was selected for prediction. The specific parameter
estimates, obtained by optimizing a Vecchia-approximated log-likelihood, were
$(\sigma, \rho_{\text{lat}}, \rho_{\text{lon}}, \alpha, \beta) \approx (0.7,
0.33, 0.34, 1.97, 0.32)$, indicating a process with strong long memory. Since
these measurements are given on a grid, standard FFT-based acceleration methods
for matrix-vector products with $\bS$ can be used in conjunction with
Vecchia-based preconditioners to rapidly obtain solves $\bv \mapsto \bS^{-1}
\bv$ at the cost of approximately $\sim 10-20$ FFTs.  As a final detail, we
remove prediction points that are within a sufficiently small distance from
multiple data points that nearest neighbor prediction is obviously a better
choice and would be chosen in practice.
\begin{figure}[!ht]
  \centering
\begingroup
  \makeatletter
  \providecommand\color[2][]{%
    \GenericError{(gnuplot) \space\space\space\@spaces}{%
      Package color not loaded in conjunction with
      terminal option `colourtext'%
    }{See the gnuplot documentation for explanation.%
    }{Either use 'blacktext' in gnuplot or load the package
      color.sty in LaTeX.}%
    \renewcommand\color[2][]{}%
  }%
  \providecommand\includegraphics[2][]{%
    \GenericError{(gnuplot) \space\space\space\@spaces}{%
      Package graphicx or graphics not loaded%
    }{See the gnuplot documentation for explanation.%
    }{The gnuplot epslatex terminal needs graphicx.sty or graphics.sty.}%
    \renewcommand\includegraphics[2][]{}%
  }%
  \providecommand\rotatebox[2]{#2}%
  \@ifundefined{ifGPcolor}{%
    \newif\ifGPcolor
    \GPcolortrue
  }{}%
  \@ifundefined{ifGPblacktext}{%
    \newif\ifGPblacktext
    \GPblacktexttrue
  }{}%
  \let\gplgaddtomacro\g@addto@macro
  \gdef\gplbacktext{}%
  \gdef\gplfronttext{}%
  \makeatother
  \ifGPblacktext
    \def\colorrgb#1{}%
    \def\colorgray#1{}%
  \else
    \ifGPcolor
      \def\colorrgb#1{\color[rgb]{#1}}%
      \def\colorgray#1{\color[gray]{#1}}%
      \expandafter\def\csname LTw\endcsname{\color{white}}%
      \expandafter\def\csname LTb\endcsname{\color{black}}%
      \expandafter\def\csname LTa\endcsname{\color{black}}%
      \expandafter\def\csname LT0\endcsname{\color[rgb]{1,0,0}}%
      \expandafter\def\csname LT1\endcsname{\color[rgb]{0,1,0}}%
      \expandafter\def\csname LT2\endcsname{\color[rgb]{0,0,1}}%
      \expandafter\def\csname LT3\endcsname{\color[rgb]{1,0,1}}%
      \expandafter\def\csname LT4\endcsname{\color[rgb]{0,1,1}}%
      \expandafter\def\csname LT5\endcsname{\color[rgb]{1,1,0}}%
      \expandafter\def\csname LT6\endcsname{\color[rgb]{0,0,0}}%
      \expandafter\def\csname LT7\endcsname{\color[rgb]{1,0.3,0}}%
      \expandafter\def\csname LT8\endcsname{\color[rgb]{0.5,0.5,0.5}}%
    \else
      \def\colorrgb#1{\color{black}}%
      \def\colorgray#1{\color[gray]{#1}}%
      \expandafter\def\csname LTw\endcsname{\color{white}}%
      \expandafter\def\csname LTb\endcsname{\color{black}}%
      \expandafter\def\csname LTa\endcsname{\color{black}}%
      \expandafter\def\csname LT0\endcsname{\color{black}}%
      \expandafter\def\csname LT1\endcsname{\color{black}}%
      \expandafter\def\csname LT2\endcsname{\color{black}}%
      \expandafter\def\csname LT3\endcsname{\color{black}}%
      \expandafter\def\csname LT4\endcsname{\color{black}}%
      \expandafter\def\csname LT5\endcsname{\color{black}}%
      \expandafter\def\csname LT6\endcsname{\color{black}}%
      \expandafter\def\csname LT7\endcsname{\color{black}}%
      \expandafter\def\csname LT8\endcsname{\color{black}}%
    \fi
  \fi
    \setlength{\unitlength}{0.0500bp}%
    \ifx\gptboxheight\undefined%
      \newlength{\gptboxheight}%
      \newlength{\gptboxwidth}%
      \newsavebox{\gptboxtext}%
    \fi%
    \setlength{\fboxrule}{0.5pt}%
    \setlength{\fboxsep}{1pt}%
    \definecolor{tbcol}{rgb}{1,1,1}%
\begin{picture}(9060.00,2820.00)%
    \gplgaddtomacro\gplbacktext{%
      \csname LTb\endcsname
      \put(803,690){\makebox(0,0)[r]{\strut{}\tiny 12}}%
      \csname LTb\endcsname
      \put(803,971){\makebox(0,0)[r]{\strut{}\tiny 14}}%
      \csname LTb\endcsname
      \put(803,1252){\makebox(0,0)[r]{\strut{}\tiny 16}}%
      \csname LTb\endcsname
      \put(803,1533){\makebox(0,0)[r]{\strut{}\tiny 18}}%
      \csname LTb\endcsname
      \put(803,1814){\makebox(0,0)[r]{\strut{}\tiny 20}}%
      \csname LTb\endcsname
      \put(803,2095){\makebox(0,0)[r]{\strut{}\tiny 22}}%
      \csname LTb\endcsname
      \put(803,2375){\makebox(0,0)[r]{\strut{}\tiny 24}}%
      \csname LTb\endcsname
      \put(1034,360){\makebox(0,0){\strut{}\tiny -114}}%
      \csname LTb\endcsname
      \put(1542,360){\makebox(0,0){\strut{}\tiny -110}}%
      \csname LTb\endcsname
      \put(2050,360){\makebox(0,0){\strut{}\tiny -106}}%
      \csname LTb\endcsname
      \put(2557,360){\makebox(0,0){\strut{}\tiny -102}}%
    }%
    \gplgaddtomacro\gplfronttext{%
      \csname LTb\endcsname
      \put(406,1469){\rotatebox{-270.00}{\makebox(0,0){\strut{}\scriptsize Latitude}}}%
      \csname LTb\endcsname
      \put(1853,60){\makebox(0,0){\strut{}\scriptsize Longitude}}%
      \csname LTb\endcsname
      \put(1853,2759){\makebox(0,0){\strut{}\scriptsize SST Anomaly (deg. C)}}%
    }%
    \gplgaddtomacro\gplbacktext{%
      \csname LTb\endcsname
      \put(3470,678){\makebox(0,0)[r]{\strut{}\tiny 14}}%
      \csname LTb\endcsname
      \put(3470,974){\makebox(0,0)[r]{\strut{}\tiny 15}}%
      \csname LTb\endcsname
      \put(3470,1270){\makebox(0,0)[r]{\strut{}\tiny 16}}%
      \csname LTb\endcsname
      \put(3470,1566){\makebox(0,0)[r]{\strut{}\tiny 17}}%
      \csname LTb\endcsname
      \put(3470,1861){\makebox(0,0)[r]{\strut{}\tiny 18}}%
      \csname LTb\endcsname
      \put(3470,2157){\makebox(0,0)[r]{\strut{}\tiny 19}}%
      \csname LTb\endcsname
      \put(3470,2453){\makebox(0,0)[r]{\strut{}\tiny 20}}%
      \csname LTb\endcsname
      \put(3651,360){\makebox(0,0){\strut{}\tiny -112}}%
      \csname LTb\endcsname
      \put(4240,360){\makebox(0,0){\strut{}\tiny -110}}%
      \csname LTb\endcsname
      \put(4829,360){\makebox(0,0){\strut{}\tiny -108}}%
      \csname LTb\endcsname
      \put(5417,360){\makebox(0,0){\strut{}\tiny -106}}%
    }%
    \gplgaddtomacro\gplfronttext{%
      \csname LTb\endcsname
      \put(4519,60){\makebox(0,0){\strut{}\scriptsize Longitude}}%
      \csname LTb\endcsname
      \put(5712,677){\makebox(0,0)[l]{\strut{}\tiny -2}}%
      \csname LTb\endcsname
      \put(5712,1073){\makebox(0,0)[l]{\strut{}\tiny -1}}%
      \csname LTb\endcsname
      \put(5712,1469){\makebox(0,0)[l]{\strut{}\tiny 0}}%
      \csname LTb\endcsname
      \put(5712,1866){\makebox(0,0)[l]{\strut{}\tiny 1}}%
      \csname LTb\endcsname
      \put(5712,2262){\makebox(0,0)[l]{\strut{}\tiny 2}}%
      \csname LTb\endcsname
      \put(4519,2759){\makebox(0,0){\strut{}\scriptsize holdout data}}%
    }%
    \gplgaddtomacro\gplbacktext{%
      \csname LTb\endcsname
      \put(6136,678){\makebox(0,0)[r]{\strut{}}}%
      \csname LTb\endcsname
      \put(6136,974){\makebox(0,0)[r]{\strut{}}}%
      \csname LTb\endcsname
      \put(6136,1270){\makebox(0,0)[r]{\strut{}}}%
      \csname LTb\endcsname
      \put(6136,1566){\makebox(0,0)[r]{\strut{}}}%
      \csname LTb\endcsname
      \put(6136,1861){\makebox(0,0)[r]{\strut{}}}%
      \csname LTb\endcsname
      \put(6136,2157){\makebox(0,0)[r]{\strut{}}}%
      \csname LTb\endcsname
      \put(6136,2453){\makebox(0,0)[r]{\strut{}}}%
      \csname LTb\endcsname
      \put(6318,360){\makebox(0,0){\strut{}\tiny -112}}%
      \csname LTb\endcsname
      \put(6907,360){\makebox(0,0){\strut{}\tiny -110}}%
      \csname LTb\endcsname
      \put(7495,360){\makebox(0,0){\strut{}\tiny -108}}%
      \csname LTb\endcsname
      \put(8084,360){\makebox(0,0){\strut{}\tiny -106}}%
    }%
    \gplgaddtomacro\gplfronttext{%
      \csname LTb\endcsname
      \put(7186,60){\makebox(0,0){\strut{}\scriptsize Longitude}}%
      \csname LTb\endcsname
      \put(8379,420){\makebox(0,0)[l]{\strut{}\tiny -0.4}}%
      \csname LTb\endcsname
      \put(8379,610){\makebox(0,0)[l]{\strut{}\tiny -0.3}}%
      \csname LTb\endcsname
      \put(8379,801){\makebox(0,0)[l]{\strut{}\tiny -0.2}}%
      \csname LTb\endcsname
      \put(8379,992){\makebox(0,0)[l]{\strut{}\tiny -0.1}}%
      \csname LTb\endcsname
      \put(8379,1183){\makebox(0,0)[l]{\strut{}\tiny 0}}%
      \csname LTb\endcsname
      \put(8379,1374){\makebox(0,0)[l]{\strut{}\tiny 0.1}}%
      \csname LTb\endcsname
      \put(8379,1565){\makebox(0,0)[l]{\strut{}\tiny 0.2}}%
      \csname LTb\endcsname
      \put(8379,1756){\makebox(0,0)[l]{\strut{}\tiny 0.3}}%
      \csname LTb\endcsname
      \put(8379,1947){\makebox(0,0)[l]{\strut{}\tiny 0.4}}%
      \csname LTb\endcsname
      \put(8379,2138){\makebox(0,0)[l]{\strut{}\tiny 0.5}}%
      \csname LTb\endcsname
      \put(8379,2329){\makebox(0,0)[l]{\strut{}\tiny 0.6}}%
      \csname LTb\endcsname
      \put(8379,2519){\makebox(0,0)[l]{\strut{}\tiny 0.7}}%
      \csname LTb\endcsname
      \put(7186,2759){\makebox(0,0){\strut{}\scriptsize $\hat{\by}^{\text{missing}}_{\bm{Q}} - \hat{\by}^{\text{missing}}_{\text{Vecchia}}$}}%
    }%
    \gplbacktext
    \put(0,0){\includegraphics[width={453.00bp},height={141.00bp}]{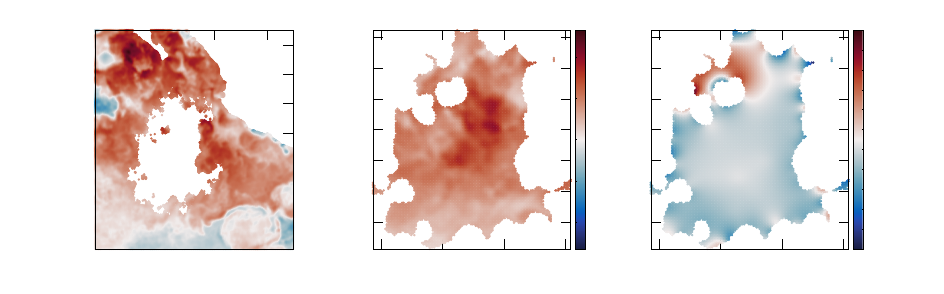}}%
    \gplfronttext
  \end{picture}%
\endgroup

  \caption{The SST anomaly data used for the prediction study. Left panel: the
  data treated as given. Middle panel: the holdout data from a synthetic cloud
  mask. Right panel: the difference in predictions between our method introduced
  here with $\bm{Q} \in \R^{n \times 35}$ and a Vecchia approximation with
  $k=200$ conditioning points for marginal data covariance and prediction.}
  \label{fig:ssta}
\end{figure}

Figure \ref{fig:ssta} shows the given and synthetically missing data in the left
two panels. To assess the quality of this new prediction method, we compare the
estimates and conditional variances one gets with those obtained from a Vecchia
approximation \cite{vecchia1988,stein2004,datta2016,finley2019,katzfuss2021},
which is arguably the clear best general-purpose approach for large-scale
estimation and prediction of Gaussian process models for environmental datasets
like this one \cite{heaton2019}. For the sake of fairness, we increased the
number of conditioning points (nearest neighbors) in the Vecchia approximation
until obtaining the conditional mean and variance of the prediction points took
longer than our approach. The rightmost panel of Figure \ref{fig:ssta} shows the
difference in these conditional means. Interestingly, we see that it is not
largest in the center of the domain where all measurements are most distant, but
instead in a region of the domain by a small pocket of proximal data. Table
\ref{tab:ssta} gives a numerical summary of prediction RMSE and continuous rank
probability score (CRPS) \cite{gneiting2007} for the two methods, indicating an
improvement in RMSE of approximately $14\%$ and in CRPS of $10\%$. Interestingly,
we note that even continuing to increase the conditioning set size much farther
in the Vecchia-based approach yields little improvement, potentially indicating
a strong non-locality in the prediction problem in that part of the missing domain.
\begin{table}[!ht]
\label{tab:ssta}
\centering
\begin{tabular}{cccc}
Method & runtime (s) & RMSE & CRPS \\
\hline
Our method ($\bm{Q} \in \R^{n \times 35}$) & $\bm{63}$  & $\bm{0.282}$ & $\bm{0.156}$ \\
Vecchia ($k=200$)                          & $78$       & $0.327$      & $0.174$ \\
Vecchia ($k=300$)                          & $173$      & $0.320$      & $0.170$
\end{tabular}
\caption{A comparison of the prediction method introduced here and a Vecchia
approximation with a high number of conditioning points for the problem of
predicting the cloud-based missing values in the SST anomaly data from Figure
\ref{fig:ssta}.} 
\end{table}

\subsection{Prediction runtime comparisons} \label{sec:pred_runtime}

As a final demonstration, we compare the cost of predicting at $m$ points in a
missing region using either $k$ nearest neighbor prediction or our method
described in this work. The setting of the prediction problem is similar to that
described in the runtime cost demonstration above: $n \approx 10\,000$ points on
$[0,1]^2$ are sampled and locations in $B_{0.2}([1/2, 1/2])$ are removed. Figure
\ref{fig:pred_times} gives a visual summary of the cost comparison of this
method versus $k$ nearest neighbor prediction. Naturally the pre-computation
overhead of this method is much higher than $k$ nearest neighbor conditioning,
as the only precomputation required for the latter method is assembling the K-D
tree, for which there are many exceptionally fast software libraries available.
After assembling $\bm{Q}$ and performing the anterpolation compressions,
however, we see that prediction is so fast that the runtime changes from $2$s
for $m=2^{10} \approx 1\,000$ to only $4$s for $m=2^{15} \approx 30\,000$. 
\begin{figure}
  \centering
\begingroup
  \makeatletter
  \providecommand\color[2][]{%
    \GenericError{(gnuplot) \space\space\space\@spaces}{%
      Package color not loaded in conjunction with
      terminal option `colourtext'%
    }{See the gnuplot documentation for explanation.%
    }{Either use 'blacktext' in gnuplot or load the package
      color.sty in LaTeX.}%
    \renewcommand\color[2][]{}%
  }%
  \providecommand\includegraphics[2][]{%
    \GenericError{(gnuplot) \space\space\space\@spaces}{%
      Package graphicx or graphics not loaded%
    }{See the gnuplot documentation for explanation.%
    }{The gnuplot epslatex terminal needs graphicx.sty or graphics.sty.}%
    \renewcommand\includegraphics[2][]{}%
  }%
  \providecommand\rotatebox[2]{#2}%
  \@ifundefined{ifGPcolor}{%
    \newif\ifGPcolor
    \GPcolortrue
  }{}%
  \@ifundefined{ifGPblacktext}{%
    \newif\ifGPblacktext
    \GPblacktexttrue
  }{}%
  \let\gplgaddtomacro\g@addto@macro
  \gdef\gplbacktext{}%
  \gdef\gplfronttext{}%
  \makeatother
  \ifGPblacktext
    \def\colorrgb#1{}%
    \def\colorgray#1{}%
  \else
    \ifGPcolor
      \def\colorrgb#1{\color[rgb]{#1}}%
      \def\colorgray#1{\color[gray]{#1}}%
      \expandafter\def\csname LTw\endcsname{\color{white}}%
      \expandafter\def\csname LTb\endcsname{\color{black}}%
      \expandafter\def\csname LTa\endcsname{\color{black}}%
      \expandafter\def\csname LT0\endcsname{\color[rgb]{1,0,0}}%
      \expandafter\def\csname LT1\endcsname{\color[rgb]{0,1,0}}%
      \expandafter\def\csname LT2\endcsname{\color[rgb]{0,0,1}}%
      \expandafter\def\csname LT3\endcsname{\color[rgb]{1,0,1}}%
      \expandafter\def\csname LT4\endcsname{\color[rgb]{0,1,1}}%
      \expandafter\def\csname LT5\endcsname{\color[rgb]{1,1,0}}%
      \expandafter\def\csname LT6\endcsname{\color[rgb]{0,0,0}}%
      \expandafter\def\csname LT7\endcsname{\color[rgb]{1,0.3,0}}%
      \expandafter\def\csname LT8\endcsname{\color[rgb]{0.5,0.5,0.5}}%
    \else
      \def\colorrgb#1{\color{black}}%
      \def\colorgray#1{\color[gray]{#1}}%
      \expandafter\def\csname LTw\endcsname{\color{white}}%
      \expandafter\def\csname LTb\endcsname{\color{black}}%
      \expandafter\def\csname LTa\endcsname{\color{black}}%
      \expandafter\def\csname LT0\endcsname{\color{black}}%
      \expandafter\def\csname LT1\endcsname{\color{black}}%
      \expandafter\def\csname LT2\endcsname{\color{black}}%
      \expandafter\def\csname LT3\endcsname{\color{black}}%
      \expandafter\def\csname LT4\endcsname{\color{black}}%
      \expandafter\def\csname LT5\endcsname{\color{black}}%
      \expandafter\def\csname LT6\endcsname{\color{black}}%
      \expandafter\def\csname LT7\endcsname{\color{black}}%
      \expandafter\def\csname LT8\endcsname{\color{black}}%
    \fi
  \fi
    \setlength{\unitlength}{0.0500bp}%
    \ifx\gptboxheight\undefined%
      \newlength{\gptboxheight}%
      \newlength{\gptboxwidth}%
      \newsavebox{\gptboxtext}%
    \fi%
    \setlength{\fboxrule}{0.5pt}%
    \setlength{\fboxsep}{1pt}%
    \definecolor{tbcol}{rgb}{1,1,1}%
\begin{picture}(7920.00,2820.00)%
    \gplgaddtomacro\gplbacktext{%
      \csname LTb\endcsname
      \put(1084,866){\makebox(0,0)[r]{\strut{}\footnotesize $0.1$}}%
      \csname LTb\endcsname
      \put(1084,1417){\makebox(0,0)[r]{\strut{}\footnotesize $1$}}%
      \csname LTb\endcsname
      \put(1084,1749){\makebox(0,0)[r]{\strut{}\footnotesize $4$}}%
      \csname LTb\endcsname
      \put(1084,1968){\makebox(0,0)[r]{\strut{}\footnotesize $10$}}%
      \csname LTb\endcsname
      \put(1084,2300){\makebox(0,0)[r]{\strut{}\footnotesize $40$}}%
      \csname LTb\endcsname
      \put(1084,2466){\makebox(0,0)[r]{\strut{}\footnotesize $80$}}%
      \csname LTb\endcsname
      \put(1185,460){\makebox(0,0){\strut{}\footnotesize $2^{10}$}}%
      \csname LTb\endcsname
      \put(2212,460){\makebox(0,0){\strut{}\footnotesize $2^{11}$}}%
      \csname LTb\endcsname
      \put(3239,460){\makebox(0,0){\strut{}\footnotesize $2^{12}$}}%
      \csname LTb\endcsname
      \put(4265,460){\makebox(0,0){\strut{}\footnotesize $2^{13}$}}%
      \csname LTb\endcsname
      \put(5292,460){\makebox(0,0){\strut{}\footnotesize $2^{14}$}}%
      \csname LTb\endcsname
      \put(6319,460){\makebox(0,0){\strut{}\footnotesize $2^{15}$}}%
    }%
    \gplgaddtomacro\gplfronttext{%
      \csname LTb\endcsname
      \put(7350,2400){\makebox(0,0)[r]{\strut{}\footnotesize $\bm{Q} \in \R^{n \times 100}$}}%
      \csname LTb\endcsname
      \put(7350,2160){\makebox(0,0)[r]{\strut{}\footnotesize $k=100$}}%
      \csname LTb\endcsname
      \put(7350,1920){\makebox(0,0)[r]{\strut{}\footnotesize $k=200$}}%
      \csname LTb\endcsname
      \put(7350,1680){\makebox(0,0)[r]{\strut{}\footnotesize $k=400$}}%
      \csname LTb\endcsname
      \put(486,1609){\rotatebox{-270.00}{\makebox(0,0){\strut{}\small runtime (s)}}}%
      \csname LTb\endcsname
      \put(3752,160){\makebox(0,0){\strut{}\small number of prediction points $m$}}%
    }%
    \gplbacktext
    \put(0,0){\includegraphics[width={396.00bp},height={141.00bp}]{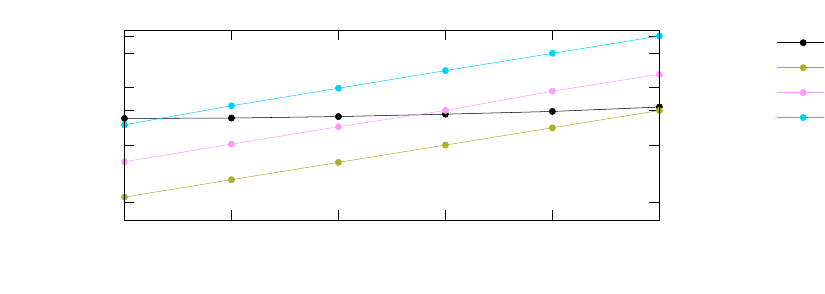}}%
    \gplfronttext
  \end{picture}%
\endgroup

  \caption{A demonstration of the precomputation-inclusive runtime cost for
  predicting at $m$ locations in $B_{0.2}([1/2, 1/2])$ with $n \approx 10\,000$
  locations in $[0,1]^2 \setminus B_{0.2}([1/2, 1/2])$. The black line gives the
  runtime cost of our method, including the precomputation cost of assembling
  $\bm{Q}$ and performing the anterpolation compressions to relative tolerance
  $10^{-9}$. 
  }
  \label{fig:pred_times}
\end{figure}
The number of conditioning points used for nearest neighbor
conditioning in Figure \ref{fig:pred_times} also demonstrates break-even costs:
if predicting with $k=100$ nearest neighbors gives the desired level of
accuracy, then this method becomes competitive for this problem if one
predicts at $m=2^{15}$ points or more. But if larger numbers of conditioning
points are necessary, the inflection point at which this approach becomes both
faster \emph{and} likely more accurate comes much sooner: at barely more than
$1\,000$ prediction points, the precomputation burden of this method is made up
for if $k=500$ conditioning points are needed. And based on the results of the
last section, circumstances in which even more than $k=500$ points are required
are not restricted to pathological examples.

\section{Discussion} \label{sec:discussion}

We have demonstrated and analyzed an approach for achieving high-accuracy
Gaussian process predictions in regions of unobserved data that is complementary
to other fast online-compatible prediction metrics like $k$-nearest neighbors.
The key observation is the design of specific linear combinations that capture
the relevant variation in the given data as it pertains to the prediction
problem. Strategies for rapidly obtaining and compressing these precomputations
to the point where $\bO(1)$-cost predictions are given, making this approach
particularly attractive for settings in which one wants to predict at a
significant number of locations that are not a priori known. 

A variety of questions on this topic are potentially exciting areas of future
work. A natural extension to study would be judiciously combining $k$-nearest
neighbor prediction and dense $\bm{Q}^T \by$ conditioning as introduced here for
sufficiently far points, and perhaps doing so dynamically.  Another particularly
exciting option is to formulate and analyze a hierarchical version of the
approximation introduced in Section \ref{sec:odlr}, which may admit sharp
efficiency bounds and give a generally useful fast algorithm for something like
an inverse square-root FMM. Finally, for both of these future projects and many
more applications, it would also be very valuable to have better methods for
selecting efficient interpolation points on complex domains.  Tooling from the
theory and computational techniques of numerical quadrature may prove
particularly useful.

\bibliography{combined}

\end{document}